\pgfplotsset{compat=newest,compat/show suggested version=false}
\newcommand{\R}{\mathbb{R}}
\newcommand{\N}{\mathbb{N}}
\newtheorem{thm}{Theorem}
\newtheorem{cor}{Corollary}
\newtheorem{dfn}{Definition}
\newtheorem{rem}{Remark}
\title{\LARGE \bf
Predictability and Fairness in Social Sensing}
\author{Ramen Ghosh, Jakub Mare\v{c}ek,~\IEEEmembership{Member,~IEEE,} Wynita M. Griggs,~\IEEEmembership{Member,~IEEE,} Matheus Souza and Robert N. Shorten,~\IEEEmembership{Senior Member,~IEEE}%
\thanks{This work was supported in part by the Science Foundation Ireland under Grant 16/IA/4610 and was co-funded under the European Regional Development Fund through the Southern \& Eastern Regional Operational Programme to Lero - the Irish Software Research Centre.  The work was also partially supported by Funda\c{c}\~ao de Amparo \`a Pesquisa do Estado de S\~ao Paulo (FAPESP/Brazil) under Grant 2016/19504-7.
%and received funding from the European Union Horizon 2020 Programme (Horizon2020/2014-2020) under Grant 688380. 
Jakub has been supported by the OP RDE funded project CZ.02.1.01/0.0/0.0/16\_019/0000765 ``Research Center for Informatics''.
Parts of this work appeared in \cite{souza_conference}.
%Jakub's research has been supported by the project RCI (reg. no. CZ.02.1.01/0.0/ 0.0/16\_019/0000765) supported by European Union.
}
\thanks{R. Ghosh is with the School of Electrical and Electronic Engineering, University College Dublin, Ireland.
{\tt\small ramen.ghosh@ucdconnect.ie}}
\thanks{J. Marecek is with the Czech Technical University in Prague, the Czech Republic.
{\tt\small jakub.marecek@fel.cvut.cz}}
\thanks{W. M. Griggs is with the Department of Civil Engineering and the Department of Electrical and Computer Systems Engineering, Monash University, Clayton, Victoria, 3800, Australia.
{\tt\small wynita.griggs@monash.edu}}
\thanks{M. Souza is with the School of Electrical and Computer Engineering, University of Campinas, Brazil.
{\tt\small msouza@fee.unicamp.br}}
\thanks{R. N. Shorten is with the the Dyson School of Design Engineering, Imperial College London, South Kingston, UK.
{\tt\small r.shorten@imperial.ac.uk}}
%\thanks{Copyright (c) 2021 IEEE. Personal use of this material is permitted. However, permission to use this material for any other purposes must be obtained from the IEEE by sending a request to pubs-permissions@ieee.org.}
}
\begin{document}
\maketitle
\thispagestyle{empty}
\pagestyle{empty}
\begin{abstract}
We consider the design of distributed algorithms that govern the manner in which agents contribute to a social sensing platform. Specifically, we are interested in situations where fairness among the agents contributing to the platform is needed. A notable example are platforms operated by public bodies,
where fairness is a legal requirement. The design of such distributed systems is challenging due to the fact that we wish to simultaneously realise an efficient social sensing platform, but also deliver a predefined quality of service to the agents (for example, a fair opportunity to contribute to the platform). In this paper, we introduce \textit{iterated function systems} (IFS) as a tool for the design and analysis of systems of this kind. We show how the IFS framework can be used to realise systems that deliver a predictable quality of service to agents, can be used to underpin contracts governing the interaction of agents with the social sensing platform, and  which are efficient. 
To illustrate our design via a use case, we consider a large, high-density network of participating parked vehicles. When awoken by an administrative centre, this network proceeds to search for moving missing entities of interest using RFID-based techniques. We regulate which vehicles are actively searching for the moving entity of interest at any point in time. In doing so, we seek to equalise  vehicular energy consumption across the network. This is illustrated through simulations of a search for a missing Alzheimer's patient in Melbourne, Australia. Experimental results are presented to illustrate the efficacy of our system and the predictability of access of agents to the platform independent of initial conditions.
\end{abstract}

\begin{IEEEkeywords}
Smart Cities, Internet of Things (IoT), Social sensing, Radio Frequency Identification Systems, Ergodicity, Control theory.
\end{IEEEkeywords}

\section{INTRODUCTION}
In many applications, a physical phenomenon can be sensed by collecting data collaboratively \cite{howe2006rise,Luetal2014,6740844,artikis2014heterogeneous,Cogilletal2014,6517107,7365472,7397993,7932851,8382177,9105079}, either from humans directly, or from devices acting on their behalf. This is variously known as (spatial) crowdsourcing \cite{howe2006rise,To2014,8316812}, (mobile) crowd sensing \cite{8429062,guo2015mobile}, or social sensing \cite{8666667,wang2015social}. Often, there are more humans and their devices that could contribute to the platform than the platform can utilize at any given time. 
In such situations, it may be desirable that the algorithms that govern the sensing process have certain properties such as fair and predictable distribution of the work among agents. These requirements are becoming very important and arise in many situations:
\begin{itemize}
    \item For example, in applications where the platform is operated by a branch of a government, which often has a legal requirement to ensure  fair and equitable treatment of citizens.
    In many such situations, the rights of sub-groups and their  representation in the sensing platform must be considered in the mechanism-design process.
One may think of this as akin to ensuring the ability of citizens to vote in a voting system. 
\item Another example arises in applications where there are certain types of incentives on offer. In such situations, one is interested in ensuring that agents have an equal chance to avail of these incentives. Often, this is a legal requirement (e.g., stemming from lottery regulations). 
\item A further example occurs in applications where written contracts between the participants and the platform are issued, which should involve guarantees of fairness of predictability as some quality-of-service measures. 
\item Finally, the same requirements arise in situations where fair and predictable access is mandated for legal or other reasons. Among others, the European Commission \cite{Euractive2021} aims to regulate certain ``high-risk'' AI applications. For example, when participants report pollution levels in a neighbourhood and this information is then used to route vehicular traffic, a sensing platform may be legally required to provide a fair and predictable access to participants from all neighbourhoods, to make sure that certain neighbourhoods do not see excessive traffic due to their under-representation in the pollution-sensing scheme.
\end{itemize}

%novel algorithms and analyses have appeared, in which stochastic processes \cite{algan2014solving} are used to develop

% In these papers, typically, information-gathering capabilities from individual agents are aggregated to develop systems that can be used in large-scale smart-city project.

Generally speaking, fairness issues have not yet been widely considered in the context of the design of social sensing systems. Typically, 
in social sensing systems, information and actuation capabilities are crowd-sourced to generate functionality to control and influence ensemble behaviour, 
with the primary objective often being the efficiency of the platform, frequently with some privacy guarantees. Examples of such situations in smart-city applications
include sensing to detect and allocate parking spaces, electric charge points, or as we have mentioned, ambient pollution
in cities. While prior papers deal with many aspects of crowd-sensing problems, most have focused on the design
of efficient crowd-sensing systems. Efficient could mean, for example, systems that minimize energy consumption, or
have the smallest pollution footprint.

Our interest in this paper is somewhat different and stems from a desire to
develop systems that are not only efficient, but in which agents' rights to contribute to the platform are characterised by certain fairness and predictability
constraints (perhaps out of economic or legal considerations). As we have mentioned, in many such situations, the rights of
sub-groups and the representation in a sensing platform must be coded as part of the algorithmic design process. 

Often, in such systems, a feedback signal is also used to regulate  the number of participants contributing to the social-sensing platform at any given time. For example, in many situations, a generalisation of a price signal could be used to encourage/discourage agents to contribute to a crowd-sensing platform. In other situations, we might wish to regulate the number of participants contributing to a task to minimize energy consumption. The architecture of such systems is depicted in Figure \ref{fig:illustration1}. 

\begin{figure}[bt]
\centering
\includegraphics[width=0.92\columnwidth]{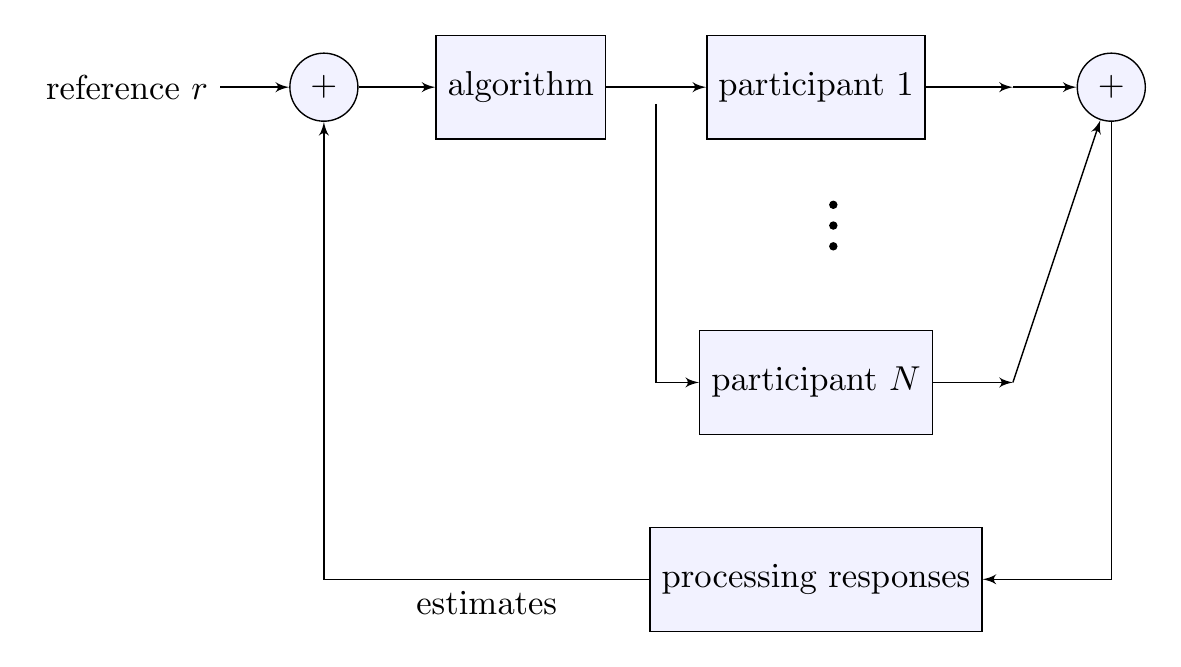}
\caption{Basic setting. There are more potential participants than  a crowdsensing platform will need to use on a given day. The platform selects the participants, processes their answers, and depending on some reference signal, may invite further participants.}\label{fig:illustration1}
\end{figure}

Designing sensing systems of this form is challenging. Clearly, we wish to allocate access to the sensing platform in a manner that is not wasteful, which gives an optimal return on the use of the resource for society, and which, in addition, gives a guaranteed level of service to each of the agents competing for that resource. Roughly speaking, when we design such systems, we seek to meet the following objectives. 
\begin{itemize}
\item Our first objective is to solve the regulation problem. For example, we may wish to ensure that a certain number of agents contribute to the sensing platform at any time (for example, to minimize the monetary burden on the sensing system provider, or to minimize the utilization of some shared communication links). 
\item Second, we would then like to develop sensing systems with the optimal behaviour. In the above example, we might select agents that have a lower energy footprint in measuring sensing data.
\item The third objective focuses on the effects of the control on the microscopic properties of the agent population. In particular, we may wish that each agent, on average, receives a fair opportunity to contribute data to the platform, or at a much more fundamental level, we may wish the average allocation access to the platform for each agent over time is a stable quantity that is entirely predictable, and which does not depend on the initial conditions. The need for fair access to the resource may arise for a number of reasons. For example, agents may have paid to write to the platform, or may even be mandated as part of some legal requirement. \cite{santos2005accuracy}
\end{itemize}

The first two of the above objectives are classical control theoretic objectives. The third is somewhat new, even in the context of control engineering. In what follows, we shall show that all three objectives can be met
in the design of our crowd-sourcing algorithms. To do this, our principal tool will be to develop techniques, whereby we establish conditions that guarantee ergodicity. Specifically, by ergodicity we mean the existence of a unique invariant measure, to which the system is attracted in a statistical sense, irrespective of the initial conditions. Thus, the design of systems for deployment in multi-agent applications must consider not only the traditional notions of regulation and optimisation, but also the guarantees concerning the existence of a unique invariant measure. This is not a trivial task and many familiar strategies fail. Specifically, our principal contribution in this paper is to develop a framework for reasoning about fairness in social sensing in the sense of guaranteeing that the number of queries per participant will be equalised among comparable participants, in expectation, even when the population of participants varies over time. A prerequisite for fairness is predictability in the sense of guaranteeing that the expected number of queries per participant is independent of the initial state.

Various notions of fairness could then be devised and enforced by shaping the so-called unique invariant measure for a related stochastic system, although we demonstrate only the use of one particular notion of fairness in this paper.

In particular, we develop a meta-algorithm for social sensing in  a time-varying setting, for which we prove guarantees of predictability and fairness by reasoning about the existence of a unique invariant measure for a related stochastic system. We believe that our work is one of the first to deal with this problem in a social-sensing context.\newline

{\bf Comment:} Before proceeding, we remark that the aforementioned notions of ergodicity and invariant measure are very precise technical terms grounded in the theory of stochastic processes. However, we emphasize that our interest in these concepts is entirely practical. Establishing conditions for an invariant measure is the tool we use to prove that a particular algorithm is ergodic. Ergodicity implies two basic things in the context of a stochastic system. First, it means that the initial conditions do not matter in the long run. Thus, it imbues the notions of predictability and fairness in the context of any given algorithm, both of which are important for writing economic contracts. Second, it means that simulations can be trusted. In many cases, the algorithms that we develop give rise to complicated stochastic systems and simulations are often our only tool to understand them. Ergodicity implies that such simulations can be trusted.\newline

\subsection{A motivating application}

\begin{figure}[tb]
\centering
\includegraphics[width=0.98\columnwidth]{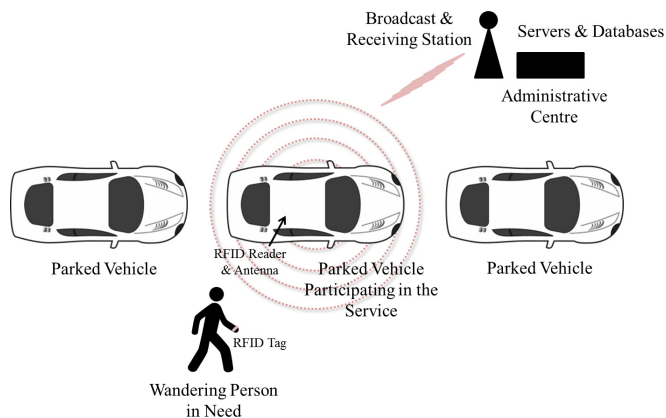}
\caption{An illustration of the RFID-based system, following \cite{souza_conference}. (Some sub-images obtained from Openclipart \cite{OpenClipArt1,OpenClipArt2}.)}\label{illustration}
\end{figure}

As a motivating application, consider the situation when an entity, such as a material object or a pet, goes missing. 
%It can be a stressful experience for all involved. 
Considering that objects, pets, and even people go missing every day, there exist methods and systems in place to facilitate the location of missing entities.
For example, computer applications allow us to track missing or stolen smartphones. Pets can be micro-chipped or equipped with ``smart'' collars. Medical jewellery and community support networks exist to aid people with needs who wander, including Alzheimer's patients \cite{AlzheimersAssociationIII}, find their ways back home. Looking towards the future, the emergence of the Internet of Things (IoT) may allow for the automation of the search, and therewith, improved response times.

For instance, in the context of IoT, the vehicles that we drive are becoming connected to each other, to the infrastructure, and to the Internet \cite{Luetal2014,Cogilletal2014}. With expanding on-board sensing, computing, and communication capabilities, parked cars no longer need to be idle, to be of no service to us during the extended periods when they are not being driven.  Recently, \cite{Cogilletal2014,Veragoetal2015,Griggsetal2018},
the use of networks of parked vehicles in dense urban areas has been suggested for the detection and localisation of moving, missing entities using RFID technology \cite{landaluce2020review}. From these preliminary studies, a key question arises: How can we distribute the searching agents to quickly locate the moving, missing entity, while also reducing the redundancy, and thus increasing energy efficiency in the system?  
One approach to designing such a system would be as follows. Technically, we consider a feedback loop wherein the administrative centre broadcasts a signal to all agents capable of participating in social sensing. The agents respond to the signal stochastically and thereby alter the state of the system. The administrative center observes a filtered aggregate state of the system, and the process repeats.
In the example of an urban centre with the aim of regulating the number or density of cars looking for a missing,
moving entity efficiently, the administrative center may be the municipality or police force. Probabilistic models of each vehicle switching on or off their RFID reader correspond to the numbers of each vehicle's neighbours that are also capable of participating, and this is obtained by sending out a ``ping'' and observing the responses. The agent uses the broadcast signal from the administration centre, together with the relevant probability model deduced by the number of his or her neighbours, to ``flip a coin'' and determine whether to ``Switch On'' its RFID reader over the next time interval or not. This process is repeated for subsequent time steps. 

\subsection{Specific contributions}

This paper builds on three pieces of our earlier work  \cite{Griggsetal2018,souza_conference,Fioravanti18}. The first paper \cite{Griggsetal2018} concentrated on simulating a specific case of locating an Alzheimer's patient in inner-city Dublin using RFID readers installed in parked cars. That first work considered all parked cars within a certain area would be awoken by a central authority and attempt to locate the missing entity. While simulations have shown that this approach can be highly effective in finding a missing entity, turning all cars on for detection within a given time interval may deplete the participants' batteries without improving coverage substantially, as clustering cars have similar information to report at that given time frame. Moving towards a more efficient way of turning cars' sensors on or off, the second paper \cite{souza_conference} addresses this problem by proposing a stochastic algorithm for regulating the number of cars turned on for searching in which each car would randomly decide its status based on the number of detected neighbouring vehicles. Indeed, cars with many neighbouring vehicles may be less likely to turn their sensors on as it is probable that at least one of the cars in the cluster will turn its sensor on. On the other hand, cars with fewer neighbouring vehicles must have their sensors almost always on as there is a low probability that they can rely on their neighbours to cover that specific area. Simulations have pointed out that similar results to the ones provided in the first paper can be achieved with significantly fewer `active' vehicles at each time frame. The foundation of this stochastic algorithm is described in \cite{Fioravanti18} and is based on a feedback model, in which a population of agents is regulated by a central authority, the controller. In that reference, the ergodicity of the closed-loop system dynamics is sought, as this rather theoretical property is closely related to predictability and fairness in practice. This paper further extends these papers by revisiting the algorithmic aspects of the design, and by extending the application scenario to a more extensive use case. As before, we are interested here in the ergodicity of the feedback system as a prerequisite to writing contracts and to provide fairness guarantees to individual agents. Specifically, the work presented in this current paper expands upon our previous work significantly, as follows.\newline
\begin{itemize}
    \item[A.] We present a framework for reasoning about the predictability and fairness of regulating task distribution in social sensing. We introduce iterated function systems as a tool for addressing such issues in the context of social sensing.\newline
    \item[B.] We develop conditions that ensure predictability and fairness, even when there are small deviations in the probabilistic models over time (this scenario is not considered in \cite{Fioravanti18}).\newline
    \item[C.] We expand upon the motivating application of searching for a missing person with illustrations from simulations from Melbourne, Australia. The new simulations corroborate our analysis: using Algorithm \ref{algo01}, ``Switching On'' or ``Off'' of the RFID readers per participant over time is, indeed,  independent of the initial state and does exhibit weak convergence \cite{van1996weak}.\newline
\end{itemize}
In terms of contributions to the field of social sensing, to the best of our knowledge, this work is the first to develop analytical tools for the design of social sensing platforms that guarantee predictability and certain notions of fairness. This builds on the use of iterated function systems to both model and design such systems. A further contribution to the general field is to use these systems to study the robustness of these systems to uncertainties, and to apply them to the design of a social sensing platform.

\subsection{Paper Organisation}

The paper is structured as follows. In Section II, we provide an overview of related work both in social sensing and control theory.
In Section III, we formalise the problem of predictability and fairness mathematically and present our main 
algorithmic results. In Section IV, we develop a mathematical framework for establishing the robust ergodic properties of social sensing systems.
Finally, in Section \ref{sec:results}, we demonstrate the utility of our results by revisiting the use case of a missing Alzheimer's patient.
Conclusions and future work are presented in Section \ref{conclusions} and supplementary mathematical results are presented in the Appendix.\newline

\section{RELATED WORK}
\subsection{Social sensing}
There exists extensive literature on social sensing, as surveyed in \cite{8666667,wang2015social}. Much of the early work has been empirical and exploratory in nature; e.g., \cite{howe2006rise,artikis2014heterogeneous}. More recently, however, rigorous analyses have started to appear. The authors of \cite{6517107} consider credibility estimation and \cite{restuccia2017quality} set the study in context. \cite{7456345} proposed various likelihood-based inference algorithms %(EM and Fisher scoring) 
and bound their performance.
%in terms of Fisher information asymptotically.
\cite{7932851} combined both efforts in a time-sensitive setting. 
%\cite{8382177} considered estimation in a Comment-Thread-Network model, similar to Facebook's.
A number of studies \cite{6529080,To2014,7501846,WANG2016,8207350,8351913,8909376,9105079} analysed privacy in this context, as surveyed in \cite{8080202}: \cite{6529080} consider k-anonymity and \cite{7501846,8207350,8909376} consider differential privacy, for instance.
A related stream of work considers distributed and secure storage technologies such as blockchain \cite{8482313} and location-privacy therein \cite{8926541}.
This hints at the maturity of the field.\newline

Numerous applications have been developed, ranging from the
detection of pot holes \cite{eriksson2008pothole},
and crowdsourced traffic monitoring (Nericell \cite{mohan2008nericell}, Waze, or Google Live Traffic \cite{jeske2013floating}), 
road-traffic delay estimation (Waze, VTrack, or Google Live Traffic \cite{jeske2013floating}),
understanding of traffic accidents \cite{artikis2014heterogeneous},
pollution \cite{stevens2010crowdsourcing}, 
and generation of fine-grained noise maps \cite{rana2010ear, maisonneuve2010participatory}, 
to 
the search for missing entities \cite{Cogilletal2014, Veragoetal2015, 6037138,7035659,7929047,Griggsetal2018} such as stolen bicycles \cite{6037138}, lost children \cite{7035659,7929047}, and Alzheimer's patients \cite{Griggsetal2018}. Most recently, social sensing has found applications in sensing within the COVID-19 pandemic \cite{rashid2020covidsens}. Indeed, most track-and-trace approaches, e.g., \cite{Chang2020}, can be seen as a form of social sensing.   
We refer to \cite{chatzimilioudis2012crowdsourcing} for a nice overview of the classical applications.\newline

In this paper, our focus is on the search for missing entities \cite{Cogilletal2014, Veragoetal2015,Griggsetal2018}, 
where, similarly to other vehicle-based approaches \cite{mohan2008nericell,8482313}, task allocation impacts participants' automotive batteries, where the adverse impact is small, but measurable.
Our techniques can be applied  more broadly, though. In general, applications of social sensing, wherein allocated tasks may have an adverse impact on the participants, however small, may benefit from fairness considerations the most. Consider, for instance,  
requiring the driver of a vehicle to focus on the small screen, 
which may impact road safety,
or repeated queries concerning symptoms in medical applications, such as queries as to whether their heads ache in \cite{rashid2020covidsens}, 
where such ``priming'' may change perceptions of the symptoms.

\subsection{RFID-based approaches for the search for missing entities}

Regarding the specific example of agents searching for missing entities, the RFID-based system described in \cite{Cogilletal2014, Veragoetal2015, Griggsetal2018}, and illustrated in Fig. \ref{illustration}, was envisioned as follows. Each participating parked vehicle comprised of an RFID reader and an antenna on board, and was able to communicate with an administrative centre.
The potentially missing entity was presumed to be carrying an RFID passive tag via some means; e.g., a wrist band.
%Passive RFID tags do not require a local power source, beyond the field created by the RFID reader, and thus need not contain batteries.
If the entity went missing, an alarm would be raised with the administrative centre. For example, the entity's carer or owner places a phone call to the police. Once the alarm is raised, the administrative centre prompts the application on board the parked vehicles. The RFID technology enables those vehicles to attempt to locate the missing entity, and to inform the administrative centre when it is found (i.e., when the RFID equipment on board a parked vehicle detects and processes the presence of the unique RFID passive tag carried by the missing entity).
%The information sent to the administrative centre might include a time stamp, a GPS location of the parked vehicle, and the unique RFID passive tag ID carried by the missing entity that was detected by the equipment on board the parked vehicle.
Finally, once detected, the administrative centre then invokes a procedure aimed at making contact with the missing entity. For example, the police may go to the location at which the entity was detected to refine the localisation and, if required, help the entity on its way home. See \cite{Cogilletal2014, Veragoetal2015, Griggsetal2018} for further details and \cite{landaluce2020review} for a survey of related work in social sensing with RFID technologies. Investigations conducted in \cite{Griggsetal2018} concerning demonstrating the efficacy of the RFID-based system were purely simulation-based. The system was demonstrated through a use case scenario of a missing Alzheimer's patient in inner-city Dublin, Ireland.
For the simulations, system parameters were varied, including: (i) the percentage of parking spaces on the map of Dublin that were inhabited by vehicles participating in the service; (ii) the polling rate of the RFID equipment on board the participating parked vehicles; and (iii) the RFID equipment's detection range. Results were presented from thousands of simulations and consisted of: (a) the average time that it took for the network of participating parked vehicles to detect the moving pedestrian; and (b) the number of times that the system failed to detect the pedestrian within a thirty-minute time frame. An interesting (albeit expected) observation that the results revealed was one of redundancy, in that the average detection times, and the ``failed to detect'' totals, followed curves resembling the exponential. That is, the average detection time, and ``failed to detect'' results, remained relatively constant until a ``threshold'' participation percentage was reached. When the number of parking spaces inhabited by searching vehicles fell below these thresholds, the detection times, and especially the ``failed to detect'' totals, increased sharply. These observations of redundancy over certain thresholds led to the question of how to distribute the searching agents to quickly locate a moving, missing entity, while also reducing redundancy (and thus increasing energy efficiency) in the system; and thus inspired the preliminary work of \cite{souza_conference}, in which the application of control theory to the motivating problem was first considered.

\subsection{Ergodic properties of the associated closed-loop systems}
The issue of ergodicity has also become topical in the area of  control theory \cite{marevcek2015signalling,WANG2016,marevcek2016r, Fioravanti17, Fioravanti18, kabir2020receding}. Indeed, some of the work presented in this present manuscript builds on \cite{Fioravanti17,Fioravanti18}. These papers develop an abstract framework, blending practical aspects of intelligent transportation systems, smart cities, and techniques from classical control theory. To see the connections to this work, consider a resource allocation problem in discrete time. In particular, consider the closed-loop system as depicted in Fig. \ref{system}, which comprises a (typically large) number of agents, a controller, and a filter. The controller, $\mathcal{C}$, broadcasts a signal $\pi(k)$ at time $k$; the $N$ agents amend  their use of a shared resource in response. The use $x_i(k)$ of the resource by agent $i$ at time $k$ is modelled as a random variable, as there is an inherent randomness in the reaction of each agent to the broadcast signal. The main design task is to regulate the aggregate resource utilisation
$y(k)$, which sums the random variables $x_i(k)$ modelling the individual allocation across all agents $i$ at the given time $k$.
 In this setting, the controller usually does not have access to either
$x_i$ or $y$, but only to an estimate $\hat y$ of $y$, which is the output of a filter $\mathcal{F}$.
\begin{figure}[tb]
\centering
\includegraphics[width=0.92\columnwidth]{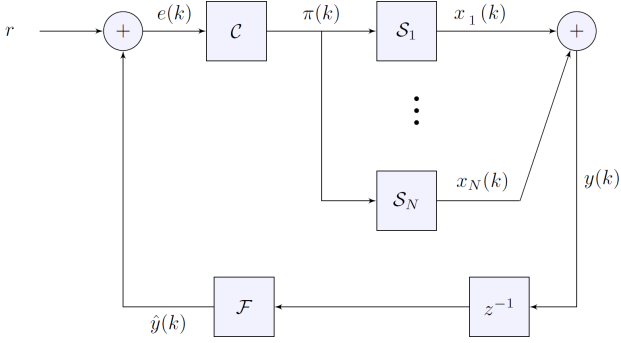}
\caption{A feedback model employed in \cite{souza_conference} and here.}\label{system}
\end{figure}
In addition to achieving regulation, the controller should also ensure that the agents have a sense of
fairness and predictability. In control-theoretic terms, this can be cast as a particular flavour of the ergodicity of the closed-loop system dynamics, known as the existence of a unique invariant measure \cite{Fioravanti17,Fioravanti18}.
This completely removes the effects of initial conditions in the long run. Overall, in the aforementioned references, the authors 
state the conditions for the unique ergodicity of the closed loop with linear controllers and filters:
\begin{thm}\cite[Theorem 3]{Fioravanti17} \label{thm:theorem-one}
Consider the feedback system depicted in Fig. \ref{system}, for some given finite-dimensional linear systems ${\mathcal C}$ and ${\mathcal F}$.
Assume that each agent $i \in \{1,\dots,N\}$ has state $x_i(k)$ governed by the following affine stochastic difference equation:
\begin{equation}\label{eq:irf_sys}
x_i(k+1) = w_{ij}\left(x_i(k)\right),
\end{equation}
where the affine mapping $w_{ij}$ is chosen at each step of time according to a Dini-continuous
probability function $p_{ij}(x_i(k), \pi(k))$, out of
\begin{align}\label{eq:func-1}
w_{ij}(x_i)= A_i x_i + b_{ij}    
\end{align}
where $A_i$ is a Schur matrix and for all $i$, $\pi(k)$, $\sum_j p_{ij}(x_i(k),\pi(k)) = 1$. In addition, suppose that there exist scalars $\delta_i > 0$ such that $p_{ij}(x_i,\pi)
\geq \delta_i > 0$; that is, the probabilities are bounded away from
zero. Then, for every stable linear controller $\mathcal{C}$ and every
stable linear filter $\mathcal{F}$, the feedback loop converges in
distribution to a unique invariant measure.
\end{thm}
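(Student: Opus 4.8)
The plan is to lift the closed loop to a single Markov chain on an augmented state space, recognise it as an iterated function system (IFS) with place-dependent probabilities, and then verify the two structural hypotheses --- uniform contractivity of the common linear part together with sufficient regularity of the selection probabilities --- that guarantee a unique invariant measure to which the chain converges weakly.

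First I would collect the agent states together with the internal states of the controller $\mathcal{C}$ and the filter $\mathcal{F}$ into a single vector $z(k) = (x_1(k),\dots,x_N(k),\xi_{\mathcal{C}}(k),\xi_{\mathcal{F}}(k))$. For a fixed tuple of agent choices $\mathbf{j}=(j_1,\dots,j_N)$, the one-step update of $z$ is affine,
\begin{equation}\label{eq:joint}
z(k+1) = \bar{A}\, z(k) + \bar{b}_{\mathbf{j}},
\end{equation}
where the crucial observation is that the broadcast signal $\pi(k)$ enters only through the probabilities $p_{ij}$ and \emph{not} through the affine maps \eqref{eq:func-1}. Consequently the linear part $\bar{A}$ is the same for every choice $\mathbf{j}$, and only the offset $\bar{b}_{\mathbf{j}}$ (together with a constant contributed by the reference) depends on $\mathbf{j}$. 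Since the agents flip their coins independently, the tuple $\mathbf{j}$ is drawn with probability $P_{\mathbf{j}}(z)=\prod_{i=1}^{N} p_{ij_i}(x_i,\pi)$, which exhibits \eqref{eq:joint} as an IFS with finitely many branches and place-dependent probabilities.

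Next I would show that $\bar{A}$ is Schur. Because $\pi$ is absent from the linear dynamics, there is no linear feedback: the interconnection agents $\to y \to \mathcal{F} \to \hat{y} \to \mathcal{C} \to \pi$ is a one-way cascade at the linear level, the feedback being closed only through the probabilistic selection. Ordering the blocks as agents, filter, controller therefore renders $\bar{A}$ block lower-triangular, so its spectrum is the union of the spectra of $\mathrm{diag}(A_1,\dots,A_N)$, of the filter matrix, and of the controller matrix. Each $A_i$ is Schur by hypothesis, and $\mathcal{F}$ and $\mathcal{C}$ are assumed stable, whence $\bar{A}$ is Schur. I would then invoke the standard fact that a Schur matrix is a strict contraction in a suitably adapted norm $\|\cdot\|_\star$ (obtained, e.g., from the solution of a discrete Lyapunov equation), so that $\|\bar{A}\|_\star = \rho < 1$. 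This is the average-contractivity condition required by the IFS theory, in the strong form where every branch contracts with the same factor $\rho$.

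It then remains to check the regularity of the weights and to conclude. Each $p_{ij}$ is Dini-continuous and satisfies $p_{ij}\ge\delta_i>0$; since $z\mapsto\pi$ is affine and hence Lipschitz, the composite $z\mapsto p_{ij}(x_i,\pi)$ remains Dini-continuous, and the finite product $P_{\mathbf{j}}$ is Dini-continuous and bounded below by $\prod_{i}\delta_i>0$. With uniform contractivity of $\bar{A}$ and these two properties in hand, I would pass to the Markov (transfer) operator on probability measures and show it is a contraction in the Wasserstein/Kantorovich--Rubinstein metric, so that a Banach fixed-point argument yields a unique invariant measure and weak convergence from any initial law. The main obstacle is exactly this last contraction estimate: unlike the constant-weight case handled by Hutchinson, the transfer operator is \emph{not} automatically contractive when the probabilities are place-dependent, and one must use the Dini modulus of $P_{\mathbf{j}}$ to absorb the mismatch between the weights assigned to two coupled copies of the chain. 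This coupling estimate, rather than the contractivity of $\bar{A}$, is where the real work lies, and it is the reason the Dini-continuity hypothesis cannot be dropped.
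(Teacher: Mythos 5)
Your proposal cannot be checked against an in-paper proof, because the paper does not prove Theorem~\ref{thm:theorem-one}: the statement is imported verbatim, with citation, from \cite[Theorem 3]{Fioravanti17}, and the appendix only proves the results that are new here (Theorem~\ref{thm:purturbation}, Theorem~\ref{thm:existence}, Theorem~\ref{thm:coupling} and the corollaries). Judged on its own terms, your reduction is sound and is exactly the structural argument that makes the cited result plausible within the paper's IFS framework: collect agents, filter and controller into one state $z$; observe that $\pi$ enters only the selection probabilities, never the affine maps \eqref{eq:func-1}, so every branch shares a single linear part $\bar A$; note that the interconnection is a cascade at the linear level, so $\bar A$ is block lower triangular with diagonal blocks $\mathrm{diag}(A_1,\dots,A_N)$, the filter matrix and the controller matrix, hence Schur precisely because the $A_i$ are Schur and $\mathcal C$, $\mathcal F$ are stable; and the joint weights $P_{\mathbf j}=\prod_i p_{ij_i}$ remain Dini-continuous (composition with affine maps, finite products) and are bounded below by $\prod_i \delta_i>0$. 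This is where all the hypotheses of the theorem are consumed, and it is the part of the argument that a reader of this paper actually needs to reconstruct.

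The step I would push back on is the finish. You propose to prove that the transfer operator is a strict $\mathcal W_1$-contraction and conclude by Banach's fixed-point theorem, while conceding that place-dependent weights make this delicate. Under Dini-only regularity the one-step contraction is, in general, simply false: coupling two copies at distance $t$ forces a branch mismatch with probability of the order of the Dini modulus $\varphi(t)$, at a cost proportional to the diameter of the (absorbing) set, and an inequality of the form $\rho t + c\,\varphi(t)\le r t$ for all small $t$ forces $\varphi(t)=O(t)$, i.e.\ Lipschitz weights. The mechanism that actually works is not a one-step estimate but summability of the modulus along contracted orbits --- $\sum_k \varphi(\rho^k)<\infty$ is precisely what Dini continuity gives --- which is the content of the Barnsley--Demko--Elton--Geronimo theorem \cite{BarnsleyDemkoEltonEtAl1988}, with Elton's theorem \cite{Elton1987ergodic} supplying the ergodic averages. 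After your reduction, the clean completion is therefore to restrict to the compact absorbing ball provided by the uniform contraction of $\bar A$ and invoke that theorem as a black box, rather than attempt a fixed-point argument in $\mathcal W_1$. It is worth noting that the paper itself never crosses this hurdle either: in Theorem~\ref{thm:purturbation} the $\mathcal W_1$-contractivity of the Markov operator is taken as hypothesis (a) (and the corresponding step $g:=\tfrac{1}{r}Pw\in\mathcal L_1$ in Appendix~\ref{sec:B} is asserted, not derived), so your identification of this as the genuinely hard point is accurate --- it is just not a point one should plan to re-prove from scratch.
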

This theoretical framework will be exploited and extended in the sequel to devise our social sensing solution. For now, there are some key aspects on this framework and specifically on the previous theorem that should be pointed out and discussed. Note first that the agents' dynamic behaviour may seem rather limited, but it suffices for several smart-cities applications, such as applications with `on-off' participants; the reader may see \cite{Fioravanti18} for extensions to the nonlinear case. Note also that the main design task in the linear setting described above is to devise two stable linear time-invariant systems (a filter and a controller) so that the closed-loop dynamics are stable. This ensures ergodicity and, thus allows for fairness. Finally, it is important to point out that the probabilities involved in the dynamic response of the agents with respect to the broadcast signal must be bounded away from zero. The lack of this assumption can yield non-ergodic stochastic processes, and in this case some agents may monopolise allocated resources.

\section{PROBLEM STATEMENT}\label{section-problem-statement}
% LETS CHECK WHAT WE DO NOT DEFINE
Let us revisit the closed-loop schema of Fig. \ref{system}, where there are $N  \in \N$ agents $\mathcal{S}_1, \dots, \mathcal{S}_N$, whose aim is to estimate the state evolution of some underlying system. 
These $N$ agents are regulated by a controller $\mathcal C$ using broadcast signal $\pi(k)$, at time $k \in \N$, which affects the agents' participation in the sensing scheme.
The state of each agent $i$ at time $k$ is captured by $x_i(k)$, which can be univariate or multivariate.

For example, the state $x_i(k)$ at time $k$ could be in the set $\{0,1\}$, which would suggest whether agent $i$ allows for the participation in social sensing ($x_i(k) = 1$) or not ($x_i(k) = 0$). In this case, at each time instant $k$, agent $i$ may have a probability $p_{i1}$ of being on and a probability $p_{i0}$ of being off at the following time step.  Both probabilities may depend on the broadcast control signal $\pi$; that is,
\begin{equation}\label{eq:state-dep-prb1}
\mathbb{P}(x_i(k+1) = 1) = p_{i1}\left(\pi(k)\right)
\end{equation}
and, thus,
\begin{equation}\label{eq:state-dep-prb2}
\mathbb{P}(x_i(k+1) = 0) = p_{i0}\left(\pi(k)\right) = 1 - p_{i1}\left(\pi(k)\right),
\end{equation}
since both events are complementary. 
More generally, we could consider a family of response functions $\{ w_{\sigma} \}_{\sigma=1}^{N}$, with probability functions $\{p_{\sigma}(x)\}_{\sigma=1}^{N}$ where $$p_{\sigma}(x): \mathcal X\to [0,1] ,\sum\limits_{\sigma=1}^{N} p_{\sigma}(x)=1,$$ and agent $i$ selects $\sigma$ according to the state-dependent probabilities
\begin{align}\label{eq:time-invariant-p}
\mathbf{p}^i(x^i(k))=(p_1^i(x^i(k)),\dots, p_N^i(x^i(k))).
\end{align}
While there is some inherent randomness in the reaction of each agent to the broadcast signal,
increasing the value of $\pi$ should increase the probabilities of participation and likewise lowering $\pi$ should induce agents to stop participating, eventually.

Finally, based on the participation of the agents, the controller has access to a filtered aggregate of observations $\hat y(k)$, with some delay ($z^{-1}$),
possibly after subtracting a reference value $r$ to obtain the error $e(k)$.
The error $e(k)$ is then used to produce the broadcast signal, thus closing the loop.
See, also, Algorithm \ref{meta} on Page \pageref{meta}.

Informally, predictability  requires that, for each agent, there exists a limit on the long-run average of the agent's state, and that this limit is independent of the agent's initial state. 
Fairness, consequently, requires that this limit coincides for all agents.
Formally:
\begin{dfn}[Predictability]\label{dfn:pred}
Whenever, for each agent $i$, there is an agent-specific constant $\overline{r}_i$ such that the following limit exists:
\begin{equation}\label{eq:predi}
\lim_{k\to \infty} \frac{1}{k+1} \sum_{j=0}^k x_i(j) = \overline{r}_i, \quad \textrm{a.s.},
\end{equation}
i.e., a long-run average of agents' states independent of the initial state $x_i(0)$, we say the system is \emph{predictable}. 
\end{dfn}
Next, fairness in the sense of statistical parity \cite{dwork2012fairness}
requires the limits of \eqref{eq:time-varying-fair} coincide for all agents $i$.
\begin{dfn}[Fairness]\label{dfn:fair}
Whenever there exists a finite constant $\overline{r}$ such that:
\begin{equation}\label{eq:time-inv-fair}
\lim_{k\to \infty} \frac{1}{k+1} \sum_{j=0}^k x_i(j) = \overline{r}, \quad \textrm{a.s.},
\end{equation}
for all agents $i$, we say that the system is \emph{fair}.
\end{dfn}
Notice that this notion of fairness is rather strict. One may equally well consider simpler notions of fairness,
perhaps summing over only certain coordinates of the multivariate 
state variable, or considering a fixed numerical threshold:
\begin{dfn}[$\epsilon$-fairness] \label{dfn:eps-fair}
Based on \eqref{eq:predi} and \eqref{eq:time-inv-fair}, we define the predictability and fairness vectors for some $r\in \mathbb R$ as follows:
\begin{align}\label{eq:pred-vec}
\hat p&=(\overline r_1, \overline r_2,\dots, \overline r_n)^{\top}\in \mathcal X\subseteq \mathbb R^n,\
\end{align}
\begin{align}\label{eq:fair-vec}
\hat f&=r.\mathbf{1}^{\top}, \textrm{ where }  \mathbf{1}^{\top}=\left(1,1,\dots, 1\right)^{\top}    
\end{align}
and, for some small $\epsilon>0$ and any vector norm $\|\cdot\|$ in $\mathbb R^n$, we say that the system is \emph{$\epsilon$-fair} if 
we have $\mathbb E\left(\|\hat p - \hat f\|\right)\le \epsilon$.
\end{dfn}
Note that this definition does not imply the existence of a protocol that could ensure $\epsilon$-fairness of the system for any input. Indeed, however large the value of $\epsilon$, and however fast the convergence of the algorithm schema for social sensing, 
a scenario can be created to violate the $\epsilon$-fairness.
Furthermore, note that while in some smart-cities applications, one may assume that the  probabilistic model is time-invariant, most social-sensing problems feature a time-varying population, and this can be challenging from a theoretical perspective. For instance, in our application,  the missing entity may move quickly (e.g., using the underground) and the number of parked cars in each agent's surroundings may change slowly.

In such a time-varying setting, two complications arise. First, the efficient task allocation \cite{boutsis2014task,duan2017distributed} (e.g., search efficiency in our motivating application) becomes computationally intractable \cite{7365407} when the probabilistic models are allowed to vary arbitrarily. 
More formally, the approximation to any non-trivial factor with respect to the number of queries is complete for polynomial-space Turing machines \cite{Papadimitriou1999}.
(This relies on the equivalence with the so-called restless multi-armed bandit problem \cite{whittle1988restless,weber1990index}, a well-known problem in applied probability.)
Hence, any social sensing scheme assuring search efficiency is computationally complex, independent of whether P equals NP, 
and in turn, 
predictability and fairness are as much as we can hope for.
Second, the analysis of predictability and fairness becomes rather non-trivial.
We address these complications with tools from stochastic analysis and control. 
Both predictability and fairness  
can be defined in terms of the properties of an associated stochastic model, which is known as an iterated function system (cf. Definition \ref{dfn:ifs} in the next section).
When the probabilistic model does not change over time, 
predictability is assured by the existence of a unique invariant measure (cf. Definition \ref{dfn:invariant} in the next section).
When we cannot rely on the probabilities and transformations in the iterated function system being invariant over time, or perfectly known to us, there are still at least two options. 
Either we can consider the notion of piece-wise stationary measures \cite{ghosh2019iterated} 
for a time-varying iterated function system \cite{ghosh2019iterated}, or we can consider perturbation analysis, also known as sensitivity analysis.
There, it is of interest to know whether a perturbation in the state causes a large difference in the behavior of the corresponding stochastic process. In our class of contractive transformations, 
we show in Theorem \ref{thm:purturbation} that small perturbations in the state or probabilities do not cause large changes in the behavior, in terms of the long-run average state. This means that we can use linear or other approximations without changing the invariant measures too much.

\section{MAIN RESULTS}\label{ref:theory}
\begin{algorithm}[t!]
\caption{An algorithm schema for social sensing with fairness guarantees}
\KwData{Number of agents $N$;  initial state $x^i(0)\in \mathcal X$ for each agent $i$; a set of possible behaviours $\{w_\tau\}_{\tau}$ valid for any agent, to be chosen with agent- and state-dependent probability; number $t$ of time steps between perturbations; time horizon $t \le T$ of time steps; a bound $\delta$ on the rate of the environment-driven change per $t$ time steps.}
{\bf Initialise} counters $s \leftarrow 0, h \leftarrow 0$, where $(s, h)$ considered lexicographically captures time \;
Central authority {\bf broadcasts} arbitrary signal $\pi(0)$, such as $0$ \;
\While{$s \cdot h \leq T$}{
\While{$h \leq t$}{
%Reset counter $k = 0$ \;
\For{each agent $i$}{
Agent $i$ {\bf calculates} state-dependent probabilities $\mathbf{p}^i(x^i(st + h))=(p_1^i(x^i(st + h)),\dots, p_N^i(x^i(st + h)))$.\;
Agent $i$ {\bf selects} response function $w_{\sigma}$, where $\sigma$ is chosen according to to the probabilities
$\mathbf{p}^i(x^i(st + h))$\;
Agent $i$ {\bf updates} state $x^i(st + h+1)$ using $x^i(st + h+1)=w_{\sigma_h^i}(x^i(st + h))$, i.e., according to \eqref{eq:state-dep-prb1}\;
%$k \leftarrow k + 1$\;
}
Central authority {\bf observes} filtered aggregate state $\hat y(st + h)$, where the filter ${\cal F}$ is possibly not known a priori \;
Central authority {\bf computes} the error $e(st + h)$ \;
Central authority {\bf broadcasts} signal $\pi(st + h)$ computed using some controller ${\cal C}$ and increments $h$ to $h +1$\;
}
The environment {\bf perturbs} the state of agents such that $| x^i((s+1)t) - x^i(st + h) | \le \delta $ and increments $s$ to $s +1$. 
}
\label{meta}
\end{algorithm}

To address predictability and fairness in social sensing rigorously, we present a result which is applicable for a class of stochastic phenomena which can be modelled as iterated function systems. This result then makes it possible to model small variations in the response of the participants within a social sensing scheme, which is a stepping stone towards analyses of a time-varying response of a population.

\subsection{A class of stochastic systems}
Let us define the class of systems we consider formally:
\begin{dfn}[Iterated function system with state-dependent probabilities \cite{barnsley1993}]
\label{dfn:ifs}
Let $\mathcal X\subseteq \mathbb R^n$ be closed, and let $\rho$ be a metric on $\mathcal X$ such that $\left(\mathcal X, \rho\right)$ is a complete metric space. Let $\{w_{\sigma}\}_{\sigma=1}^{N}$ be transformations on $\mathcal X$ and $\{p_{\sigma}(x)\}_{\sigma=1}^{N}$ be probability functions defined on Borel sigma-algebra $\mathcal B(\mathcal X)$, such that, for all $\sigma\in [1,N]$, 
\begin{align*}
&p_{\sigma}(x):\mathcal X \to [0,1] \quad \text{and } \sum\limits_{\sigma=1}^{N} p_{\sigma}(x)=1.
\end{align*}
The pair of sequences
\begin{align}\label{ifsdef}
\left( w_1(x),w_2(x), \dots, w_N(x); p_1(x), p_2(x),\dots, p_N(x)\right)
\end{align} 
is called an iterated function system (IFS).
\end{dfn}
Informally, the corresponding discrete-time Markov process on $\mathcal X$ evolves as follows:
Choose an initial point $x_0\in \mathcal X$. Select an integer from the set $[1, N]:=\{1,2,\dots, N\}$ in such a way that the probability
of choosing $\sigma$ is $p_{\sigma}(x_0)$, $\sigma\in [1,N]$. When the number $\sigma_0$ is drawn, define
\begin{align*}
x_1= w_{\sigma_{0}}(x_0).    
\end{align*}
Having $x_1$, we select $\sigma_1$ according to the distribution 
\begin{align*}
p_1(x_1), p_2(x_1), \dots, p_N(x_1),    
\end{align*}
and we define 
\begin{align*}
x_2= w_{\sigma_1}(x_1),
\end{align*}
and so on.

Let us denote $\nu_k$ for $k=0,1,2,\dots$, the distribution of $x_k$, i.e.,
\begin{align}\label{distri_xn}
\nu_k(\mathcal A)= \mathbb P(x_k \in \mathcal A) \text{ for some } \mathcal A \in \mathcal B(\mathcal X).
\end{align}
The above procedure can be formalized for a given $x\in \mathcal X$ and a Borel subset $\mathcal A\in \mathcal B (\mathcal X)$, we may easily show that the
transition operator for the given IFS is of the form:
\begin{align}\label{transii}
\nu(x, \mathcal A):= \sum_{\sigma=0}^{N} 1_{\mathcal A}\left(w_{\sigma}(x)\right) p_{\sigma}(x),
\end{align}
where $\nu(x,A)$ is the transition probability from $x$ to $\mathcal A$ and where $1_{\mathcal A}$ denotes the characteristic function of $\mathcal A$:
\begin{equation}\label{eq:char-func}
1_{\mathcal A}(x):=
\begin{cases} 
      1 & \text{ if } x\in \mathcal A.\\
      0 & \text{ if } x\in  \mathcal A^c.
   \end{cases}
\end{equation}
where, in turn, $\mathcal A^c$ denotes the complement of the event or the Borel subset $\mathcal A$.
\begin{dfn}[Markov operator \cite{myjak2003}]\label{def:markov}
Closely connected with this transition probability is the \emph{Markov operator}, denoted by $P$, defined on the space of all real or complex valued Borel measurable maps $f$ on $\mathcal X$ as:
\begin{align}\label{markov}
P f(x)=\int_{\mathcal X} f(y) \nu(x, d y)
\end{align}
\end{dfn}
\begin{dfn}[Invariant probability measure \cite{BarnsleyDemkoEltonEtAl1988,myjak2003}]
\label{dfn:invariant}
If a Markov chain $\{X_k\}$ moves with transitional probability $\eqref{transii}$, then it is of great interest to know the existence of an \emph{invariant probability measure} for the chain, i.e., existence of a probability measure $\nu_{\star}\in \mathcal M (\mathcal X)$, for which:
\begin{align}
\nu_{\star}(\mathcal A)= \int_{\mathcal A} \nu(x, \mathcal A) \nu_{\star}(dx) \quad \forall \mathcal A \in \mathcal B (\mathcal X).
\end{align}
In our analytic approach, we consider the dual of \emph{Markov operator} $P$ defined in \eqref{markov}, 
\begin{align}\label{analytic-dual}
(P^{\star}\nu)(\mathcal A)=\int_{\mathcal X} \nu(x, \mathcal A) \nu(dx),    
\end{align}
a map defined on the space of all Borel measures on $\mathcal X$.
A probability measure $\nu_{\star}$ is called  \emph{invariant probability measure} for the Markov chain $\{X_k\}$ with Markov operator $P$ if and only if
\begin{align}
P^{\star}\nu_{\star}(\mathcal A)=\nu_{\star}(\mathcal A) \quad \forall \mathcal A \in \mathcal B (\mathcal X).
\end{align}
\end{dfn}

%\begin{dfn}[Total-variation (TV) distance; Proposition 4.2 in \cite{LevinPeresWilmer2006}]
%\label{dfn:tv-dist}
%Let $\mu$ and $\nu$ be any two probability measure on $\mathcal X$ and $\mathcal B(\mathcal X)$ be a sigma-algebra on $\mathcal X$, then
%\begin{align}
%TV(\mu,\nu)=\sup_{ \mathcal A\in \mathcal{B}(\mathcal X)}\left|\mu(\mathcal A)-\nu(\mathcal A)\right|.
%\end{align}
%If $\mathcal X$ is finite, one can show the above expression is equivalent to the following:
%\begin{align}
%TV(\mu,\nu)=\frac{1}{2}\sum_{i=1}^n\left| \mu(i) -\nu(i)\right|.
%\end{align}
%\end{dfn}
We finish this preliminary section of mathematical definitions and set up by defining a useful metric on the space of probability measure on $\mathcal X$, due to Kantorovich and Rubinstein \cite{Evans99partialdifferential, mass1998, Villani2009OptimalT}, also known as Wasserstein-$1$ distance.
\begin{dfn}[Wasserstein-$1$ distance; Remark 6.5, p. 95 in \cite{Villani2009OptimalT}]
Let $\mathcal{L}_1$ denote the space of all Lipschitz maps with Lipschitz constant $1$, i.e
\begin{align*}
\mathcal L_1=\{f: \mathcal X\to \mathbb R : |f(x)-f(y)|\le \rho (x,y) \quad \forall x, y\in \mathcal X\}.
\end{align*} For $\nu_1, \nu_2\in \mathcal M (\mathcal X)$, Wasserstein-$1$ distance between these two probability measure is denoted by $\mathcal W_1(\nu_1, \nu_2)$ and is given by:
\begin{align}\label{eq:wass-dist}
\mathcal W_1(\nu_1, \nu_2)=\sup\limits_{f\in \mathcal L_{1}}\left[\int fd\nu_1-\int f d\nu_2\right].
\end{align}
\end{dfn}

\subsection{The main result}
For a given iterated function system, the following result provides a bound on the distance between its original invariant measure and the invariant measure obtained when it is perturbed. From a practical viewpoint, this bound ensures that small perturbations in the IFS parameters do not cause large changes to its long-run behaviour.
\begin{thm}
\label{thm:purturbation}
Let $P_1^{\star}$ be the Markov operator \cite{myjak2003}  of an iterated function system $\left(w_{1}(x),\dots, w_N(x); p_{1}(x), \dots, p_N(x)\right)$ with invariant measure $\nu_1^{\star}$, and let $P_2^{\star}$ be the Markov operator of the perturbed iterated function system $\left(w'_{1}(x),\dots, w'_{N}(x); p'_{1}(x), \dots, p'_{N}(x)\right)$ with invariant measure $\nu_2^{\star}$. Then, for all $x \in \mathcal{X}$, we have the following estimates of distance between the invariant measures in Wasserstein-$1$ distance (\ref{eq:wass-dist}):
\begin{align}\label{eq:esti-1}
&\mathcal W_1(\nu_{1}^{\star},\nu_{2}^{\star})\nonumber\\
&\le \frac{1}{1-r}\left(r'\sum\limits_{\sigma_{k}} p_{\sigma_{k}}(x)\Big\|w_{\sigma_{k}}(x) - w'_{\sigma_{k}}(x)\Big\|_{\infty}+\beta \eta\right)
\end{align}
where 
\begin{itemize}
\item[(a)] $0<r<1$, is a contraction factor for the Markov operator in $\mathcal W_1$ metric, 
\item[(b)] $\sigma_{0},\sigma_{1},\sigma_{2},\dots$ are i.i.d discrete-random-variable taking values in $\{1,2,\dots, N\}$,
\item[(c)] $\eta$ is the bound on the perturbation in probabilities, i.e.,
$\sum\limits_{\sigma_k} \lVert p_{\sigma_k}(x)-p'_{\sigma_k}(x)\rVert \le \eta$,
\item[(d)] $\beta$ is a bound for the real-valued continuous function $w\in C_b(\mathcal X, \mathbb R)$,
\item[(e)] $\|w(x)-w(y)\|\le r'\|x-y\|$, for some $r'$ and for almost all $x, y\in \mathcal X$.
\end{itemize}
\end{thm}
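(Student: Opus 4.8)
The plan is to exploit the fact that each invariant measure is a fixed point of its own Markov operator, $\nu_1^{\star} = P_1^{\star}\nu_1^{\star}$ and $\nu_2^{\star} = P_2^{\star}\nu_2^{\star}$, and then to combine the contraction hypothesis (a) with a one-step estimate of how far apart the two operators push a common measure. First I would write, using the triangle inequality for $\mathcal W_1$ and inserting the intermediate measure $P_1^{\star}\nu_2^{\star}$,
\begin{equation}
\mathcal W_1(\nu_1^{\star},\nu_2^{\star}) = \mathcal W_1(P_1^{\star}\nu_1^{\star}, P_2^{\star}\nu_2^{\star}) \le \mathcal W_1(P_1^{\star}\nu_1^{\star}, P_1^{\star}\nu_2^{\star}) + \mathcal W_1(P_1^{\star}\nu_2^{\star}, P_2^{\star}\nu_2^{\star}).
\end{equation}
By hypothesis (a) the first summand is at most $r\,\mathcal W_1(\nu_1^{\star},\nu_2^{\star})$, and absorbing it on the left yields
\begin{equation}
\mathcal W_1(\nu_1^{\star},\nu_2^{\star}) \le \frac{1}{1-r}\,\mathcal W_1(P_1^{\star}\nu_2^{\star}, P_2^{\star}\nu_2^{\star}),
\end{equation}
which already isolates the factor $1/(1-r)$ in the claimed bound \eqref{eq:esti-1}. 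It then remains only to estimate the residual one-step discrepancy $\mathcal W_1(P_1^{\star}\nu_2^{\star}, P_2^{\star}\nu_2^{\star})$.

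For the residual term I would pass to the Kantorovich--Rubinstein dual \eqref{eq:wass-dist} and use the adjoint relation $\int f\, d(P_i^{\star}\nu) = \int P_i f\, d\nu$, so that for every $f \in \mathcal L_1$,
\begin{equation}
\int f\, d(P_1^{\star}\nu_2^{\star}) - \int f\, d(P_2^{\star}\nu_2^{\star}) = \int \Big( P_1 f(x) - P_2 f(x)\Big)\, d\nu_2^{\star}(x),
\end{equation}
where $P_i f(x) = \sum_{\sigma} f\big(w^{(i)}_{\sigma}(x)\big)\, p^{(i)}_{\sigma}(x)$. The key algebraic step is to split the integrand by adding and subtracting a cross term,
\begin{equation}
P_1 f(x) - P_2 f(x) = \sum_{\sigma} p_{\sigma}(x)\big[f(w_{\sigma}(x)) - f(w'_{\sigma}(x))\big] + \sum_{\sigma} f(w'_{\sigma}(x))\big[p_{\sigma}(x) - p'_{\sigma}(x)\big],
\end{equation}
separating a transformation-perturbation part from a probability-perturbation part. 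On the second part, the boundedness (d) gives $|f(w'_{\sigma}(x))| \le \beta$, and summing against hypothesis (c) produces the term $\beta\eta$. On the first part, the Lipschitz property of the test function $f \in \mathcal L_1$ together with the Lipschitz bound (e) on the maps $w_{\sigma}$ is what brings in the factor $r'$ multiplying $\sum_{\sigma} p_{\sigma}(x)\,\|w_{\sigma}(x) - w'_{\sigma}(x)\|_{\infty}$. Taking the supremum over $f \in \mathcal L_1$ then assembles the two terms inside the parentheses of \eqref{eq:esti-1}.

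The step I expect to be the main obstacle is precisely the appearance of the factor $r'$ on the transformation-perturbation term: a naive Lipschitz-$1$ estimate of $|f(w_{\sigma}(x)) - f(w'_{\sigma}(x))|$ only yields the constant $1$, so the sharper constant $r'$ must come from accounting for the contractivity (e) of the maps rather than merely the Lipschitz continuity of the test functions. I would therefore carry this estimate out not at the level of a single operator application but through the backward iteration $w_{\sigma_0}\circ w_{\sigma_1}\circ\cdots$ whose almost-sure limit realises the invariant measure for a contractive IFS; there the outermost map $w_{\sigma_0}$ contracts the accumulated inner discrepancy by exactly $r'$, and coupling the two index sequences $\{\sigma_k\}$ (which are i.i.d.\ by (b)) supplies the probability-perturbation cost, closing the geometric series that is controlled by the contraction factor (a). Verifying that the shared-index coupling is admissible and that the residual terms stay uniformly bounded by $\beta$ is the remaining bookkeeping; the passage to the $n\to\infty$ limit is then routine, since contractivity guarantees almost-sure convergence of both backward iterations.
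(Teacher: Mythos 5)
Your proposal follows the paper's own proof step for step through its main skeleton: the fixed-point identity $\nu_i^{\star}=P_i^{\star}\nu_i^{\star}$, the triangle inequality through the intermediate measure $P_1^{\star}\nu_2^{\star}$, absorption of the contraction term from hypothesis (a) to isolate the prefactor $\tfrac{1}{1-r}$ (this is the paper's estimate \eqref{eq:esti-2}), the passage to the Kantorovich--Rubinstein dual \eqref{eq:wass-dist} via the adjoint relation, the add-and-subtract split of $P_1f-P_2f$ into a transformation-perturbation part and a probability-perturbation part, and the bound $\beta\eta$ on the latter from (c) and (d) --- all of this coincides with the paper's Appendix proof (its estimates \eqref{eq:esti-2} and \eqref{eq:esti-3}).

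The divergence is your handling of the factor $r'$, and that is where your proposal has a genuine gap. The paper does \emph{not} obtain $r'$ from contractivity of the IFS maps: it reads hypothesis (e) as a Lipschitz condition on the test function $w$ --- the same function that (d) bounds by $\beta$ --- and applies it in one line as $\|(w\circ w_{\sigma})(x)-(w\circ w'_{\sigma})(x)\|\le r'\|w_{\sigma}(x)-w'_{\sigma}(x)\|$. Your observation that for the standard class $\mathcal L_1$ this constant is simply $1$ is fair as a criticism of the statement's notation (in effect the theorem is the $r'=1$ statement, with (d)--(e) read as constraints on the admissible test functions), but it does not necessitate a new argument. The substitute you propose --- realizing both invariant measures as almost-sure limits of backward iterations, coupling the two index sequences, and summing a geometric series --- is left unexecuted, and as sketched it would not close: (i) the backward-iteration representation of the invariant measure requires the maps to be drawn i.i.d., i.e., place-\emph{independent} probabilities, whereas the theorem's IFS has state-dependent $p_{\sigma}(x)$, so the shared-index coupling you invoke is not available in the form you assume; (ii) even with place-independent probabilities, the two systems can share indices only on the event where both draw the same $\sigma$, and quantifying the decoupling cost under perturbed probabilities requires a maximal-coupling argument whose price is not simply $\beta\eta$; (iii) the geometric series you describe is governed by the map-contraction constant, so it would produce a prefactor of the form $\tfrac{1}{1-r'}$ rather than the claimed $\tfrac{1}{1-r}$ of \eqref{eq:esti-1} --- it is not ``controlled by the contraction factor (a)'', which concerns the Markov operator in the $\mathcal W_1$ metric, not pathwise contraction. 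The repair is simple: keep your first two displays and your splitting identity, and finish by applying (d) and (e) to the test function inside the dual estimate, exactly as the paper does; discard the backward-iteration detour.
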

The proof is included in the Appendix.
\subsection{A perturbation analysis in the time-invariant setting}
Theorem \ref{thm:purturbation} makes it possible to reason about small changes to the behaviour of participants in a social sensing scheme. 
This can be seen as a perturbation analysis or stability analysis for iterated function systems:
\begin{cor} \label{cor:one}
Consider the feedback system depicted in Fig. \ref{system}, for some given finite-dimensional linear systems ${\mathcal C}$ and ${\mathcal F}$.
Assume that each agent $i \in \{1,\dots,N\}$ has state $x_i(k)$ governed by the following affine stochastic difference equation \eqref{eq:irf_sys} where the affine mapping $w_{ij}$ is chosen at each step of time according to a Dini-continuous
probability function $p_{ij}(x_i(k), \pi(k))$, out of \eqref{eq:func-1}.
If the system is perturbed in such a way that the perturbed system is described by
\begin{align}\label{eq:purturbed}
x_i(k+1) = w'_{ij}\left(x_i(k)\right),    
\end{align}
where the affine mapping $w'_{ij}$ is chosen at each step of time according to a Dini-continuous
probability function $p'_{ij}(x_i(k), \pi(k))$, out of
\begin{align}\label{eq:func-2}
w'_{ij}(x_i)= A'_i x_i + b'_{ij}.    
\end{align}
Then, if $P_1^{\star}$ be the Markov operator for the system \eqref{eq:irf_sys} with invariant measure $\nu_1^{\star}$, and if $P_2^{\star}$ be the Markov operator of the perturbed system \eqref{eq:purturbed} with invariant measure $\nu_2^{\star}$, we have, for all $x \in \mathcal{X}$, the following estimates of distance between their invariant measure in Wasserstein-$1$ distance:
\begin{align}
&\mathcal W_1(\nu_{1}^{\star},\nu_{2}^{\star})\nonumber\\
&\le \frac{1}{1-r}\left(r'\sum\limits_{\sigma_{k}} p_{\sigma_{k}}(x)\Big\|w_{\sigma_{k}}(x) - w'_{\sigma_{k}}(x)\Big\|_{\infty}+\beta \eta\right)
\end{align}
where 
\begin{itemize}
\item[(a)] $0<r<1$, is a contraction factor for the Markov operator in $\mathcal W_1$ metric, 
\item[(b)] $\sigma_{0},\sigma_{1},\sigma_{2},\dots$ are i.i.d discrete-random-variable taking values in $\{1,2,\dots, N\}$,
\item[(c)] $\eta$ is the bound on the perturbation in probabilities, i.e.,
$\sum\limits_{\sigma_k} \lVert p_{\sigma_k}(x)-p'_{\sigma_k}(x)\rVert \le \eta$,
\item[(d)] $\beta$ is a bound for the real-valued continuous function $w\in C_b(\mathcal X, \mathbb R)$,
\item[(e)] $\|w(x)-w(y)\|\le r'\|x-y\|$, for some $r'$ and for almost all $x, y\in \mathcal X$.
\end{itemize}

\end{cor}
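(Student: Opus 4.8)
The plan is to recognise that Corollary \ref{cor:one} is simply the instantiation of Theorem \ref{thm:purturbation} in the specific affine, closed-loop setting of Fig. \ref{system}, so that the substantive work is to verify that both the nominal feedback system and its perturbation are iterated function systems to which Theorem \ref{thm:purturbation} applies; once this is done, the bound transfers verbatim. First I would assemble the closed loop into a single IFS on an augmented state space. The full state collects the agent states $x_i$ together with the internal states of the linear controller $\mathcal{C}$ and the linear filter $\mathcal{F}$. Since $\mathcal{C}$ and $\mathcal{F}$ are linear time-invariant, the broadcast signal $\pi(k)$ and the filtered aggregate $\hat y(k)$ are affine readouts of this augmented state, and the joint selection across agents of the affine updates \eqref{eq:func-1} produces a finite family of affine maps on the augmented state equipped with state-dependent probabilities built from the $p_{ij}(x_i,\pi)$. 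Thus the closed loop is exactly an IFS in the sense of Definition \ref{dfn:ifs}.

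Second, I would verify hypotheses (a) and (e) of Theorem \ref{thm:purturbation}. Condition (e) follows from the Schur property of each $A_i$ together with the stability of $\mathcal{C}$ and $\mathcal{F}$: the joint system matrix of the augmented affine maps is Schur, so there is an adapted (weighted) norm on the augmented space in which every branch is a uniform contraction with some factor $r' < 1$. Condition (a), a Wasserstein-$1$ contraction factor $r < 1$ for the Markov operator, is precisely the mechanism underlying Theorem \ref{thm:theorem-one}: Dini-continuity of the $p_{ij}$, together with the uniform lower bound $p_{ij} \ge \delta_i > 0$ and the branchwise contraction, upgrades the pointwise contraction of the maps to a contraction of $P^\star$ in $\mathcal W_1$. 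Applying the same reasoning to the perturbed data $(w'_{ij}, p'_{ij})$ of \eqref{eq:func-2} shows that the perturbed closed loop is again such an IFS; here one uses that for a sufficiently small perturbation $A'_i$ remains Schur, so that both invariant measures $\nu_1^\star$ and $\nu_2^\star$ exist and are unique by Theorem \ref{thm:theorem-one}.

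Third, with both systems identified as iterated function systems satisfying (a)--(e), I would simply invoke Theorem \ref{thm:purturbation} with data $(w_{ij}, p_{ij})$ and $(w'_{ij}, p'_{ij})$. The perturbation bounds (c) on $\eta$ and (d) on $\beta$ are inherited directly from the hypotheses, and the resulting Wasserstein-$1$ estimate is identical to \eqref{eq:esti-1}, which is exactly the inequality claimed in the corollary.

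I expect the main obstacle to be the justification that the augmented closed loop is genuinely contractive with a common factor $r < 1$ in the $\mathcal W_1$ metric, rather than merely possessing a unique invariant measure: one must combine the Schur property of the agent block with the stability of $\mathcal{C}$ and $\mathcal{F}$ to produce a single adapted norm on the augmented space witnessing the branchwise contraction, and then argue that the state-dependent probabilities do not destroy this contraction at the level of $P^\star$. A secondary subtlety is that the statement tacitly requires the perturbation to be small enough that $A'_i$ stays Schur, which is what secures the existence of $\nu_2^\star$ and hence makes the left-hand side of the estimate meaningful.
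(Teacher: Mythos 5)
Your proposal is correct and follows the same route as the paper: the paper offers no separate proof of Corollary \ref{cor:one}, presenting it as the direct specialisation of Theorem \ref{thm:purturbation} to the affine maps \eqref{eq:func-1}--\eqref{eq:func-2}, with conditions (a)--(e) carried over as hypotheses. Your additional work (assembling the closed loop with $\mathcal{C}$ and $\mathcal{F}$ into an augmented-state IFS and checking the Schur-based contraction) is a sound and more careful filling-in of details the paper leaves implicit, but it does not constitute a different method.
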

Such a perturbation analysis is also a small step from the time-invariant setting of Theorem~\ref{thm:theorem-one} towards a time-varying setting.
\subsection{A time-varying setting}\label{sub-sec:time-varying}
Many practical applications do, indeed, involve time-varying populations, i.e., populations that change over time. In particular, time-varying response functions of populations are very clearly observable in most real-world social-sensing applications, where people follow diurnal rhythms.
At night, there may be fewer participants, whose responses may be  different from the day-time participants'.  

Let us define the time-varying setting formally.
Let $\mathcal X$ be a closed subset of $\mathbb R^n$. We are given a finite set of bounded Lipschitz transformations:
\begin{align*}
\mathcal L=\{w_{\sigma}: \mathcal X\to\mathcal X\}_{\sigma=1}^{N}
\end{align*} and a countable family of $N$-tuple probability functions
\begin{align}\label{probfunc}
\{\mathbf{p}^s(x)=(p_1^s(x), p_2^s(x),\dots, p_N^s(x))\}_{s=1}^{\infty}    
\end{align}
where the variable $s$ denotes a discrete time-scale, for each fixed $s\in \mathbb N$, and for all $\sigma\in [1, N]$, $p_{\sigma}^s: \mathcal X\to [0,1]$ and for any fixed $s\in \mathbb N$,
\begin{align}\label{condprob}
& 0\le p_{\sigma}^s(x)\le 1\hspace{0.3cm}\forall \sigma\in [1,N], \nonumber\\   
&\sum\limits_{\sigma=1}^{N} p_{\sigma}^s(x)=1\quad \forall x.
\end{align}
We now introduce a time-varying stochastic situation as follows: let $s$ denote a discrete time-scale, between $s=1$ and $s=2$, a certain number say $k=1,2,\dots, t$ iteration is performed for the system \eqref{eq:irf_sys} with a tuple of probability function $\{(p_1^1(x), p_2^1(x),\dots, p_N^1(x))\}$ and after such number of iteration we change the probability-tuple and we perform the iteration again, for a general $s$, time-varying situation \eqref{eq:irf_sys} becomes 
\begin{equation}\label{eq:tvirf}
x^s(k+1)=w_{\sigma_k^s}(x^s(k))
\end{equation}
with the probability tuple $\{(p_1^s(x), p_2^s(x),\dots, p_N^s(x))\}$ ,where $\sigma^s_0,\sigma^s_1,\dots$ are i.i.d discrete-random-variable taking values in $\{1,2,\dots, N\}$.
To aid exposition, let us illustrate this with a simple example. At time scale $s=1$, choose $x^1(0)\in \mathcal X$ and calculate $\mathbf{p}^1(x^1(0))$ as defined in \eqref{probfunc}, we use this vector as the chance of selecting a transformation
\begin{align*}
w_{\sigma_1^1}(x^1(0)), \quad \sigma_1^1\in\{1,2,\dots, N\}     
\end{align*}
This is done by considering the probabilities as bins, $\sigma_1^1$, with length $p_i^1(x^1(0))$.
Placing these bins end to end on $[0,1]$ will fill the interval as a consequence of \eqref{condprob}.
We then choose a random number $q\in [0,1]$ and the bin containing $q$ corresponds to the
probability function we choose. The starting point of the next iteration is
\begin{align*}
x^1(1)=w_{\sigma_1^1}(x^1(0))    
\end{align*}
is calculated and consequently a new probability vector $\mathbf{p}^1(x^1(1))$ must be calculated. 
%This
%process is summarized for our time-varying situation as follows:
%\begin{itemize}
%\item [1.]Choose an $x^1(0)\in \mathcal X$.
%\item[2.]Calculate $\mathbf{p}^1(x^1(0))=(p_1^1(x^1(0)),\dots, p_N^1(x^1(0)))$.
%\item[3.] Select $w_{\sigma_1^1}$, where $\sigma_1^1\in[1,N]$ according to to the probabilities
%$\mathbf{p}^1(x^1(0))$.
%\item[4.] Apply $w_{\sigma_1^1}$, $x^1(1)=w_{\sigma_1^1}(x^1(0))$.
%\item[5.] Repeat this process $t$ times using $x_s(k+1)=w_{\sigma_k^s}(x_s(k))$
%\item[6.] Then do the same thing from step one to step five with a new probability-tuple $\mathbf{p}^2(x)$. And so on.
%\end{itemize}
See Algorithm 1 for the general schema.
For time steps $t$ between $s$ and $s+1$, the conditional distribution of the future depends only on the current state. For the time steps between $s$ and $s+1$, the process defined %the realization equation as 
by \eqref{eq:tvirf} is clearly Markovian.\newline
Let $C_b(\mathcal X)$ denote the set of real-valued bounded continuous functions on $\mathcal X$, for any $s$, one can define a linear map $P_s$ on $C_b(\mathcal X,\mathbb R)$:
\begin{align}\label{linear}
P_sw(x):=\sum_{i=1}^{N} p^s_{i}(x)(w\circ w_{\sigma_i^s})(x) 
\end{align}
This operator characterizes the Markov chain. Due to the fact that $P_s$ maps $C_b(\mathcal X)$ into itself, which is also known as $P_s$ being a Feller map \cite[cf. Section 5.1]{santos2005accuracy},
the chain is sometimes called Feller chain.  We will mainly be interested in the problem of uniqueness or non-uniqueness of invariant probability measures. A probability measure $\nu$ on $\mathcal X$ is called invariant for the operator $P_s$ if 
\begin{align}\label{invprob}
\int_{\mathcal X} (P_s w) d \nu=\int_{\mathcal X} w d \nu \quad \forall w\in C_b(\mathcal X).
\end{align}
Denoting the dual of the map $P_s$ as follows:
\begin{align}\label{eq:dual-2}
P_s^{\star}: \mathcal M(\mathcal X)\to \mathcal M(\mathcal X),
\end{align}
with the requirement 
\begin{align}
\int_{\mathcal X} w d (P_s^{\star} \nu)=\int_{\mathcal X} (P_s w) d \nu,
\end{align}
then \eqref{invprob} is reduced as: a $\nu_{\star}\in \mathcal M(\mathcal X)$ is invariant if and only if
\begin{align}
P_s^{\star}\nu_{\star}=\nu_{\star}.
\end{align}
Such a dual map $P_s^{\star}$ is well-defined by the Riesz representation theorem.

\begin{thm}\label{thm:existence}
Let for each $s, P_s$ be defined as in \eqref{linear}. If there exists $x\in \mathcal X$, for which the sequence of transitional probability measures $\{\nu_{m}^s(x, \cdot)\}_{m\ge 0}$ is uniformly tight, then there exists an invariant probability measure for $P_s^{\star}$.
\end{thm}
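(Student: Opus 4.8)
The plan is to run the classical Krylov--Bogolyubov construction, leaning on the Feller property of $P_s$ just established above (namely that $P_s$ maps $C_b(\mathcal X)$ into itself) together with Prokhorov's theorem. Since $\mathcal X$ is a closed subset of $\mathbb R^n$, it is a complete separable metric space, so Prokhorov's theorem applies: a family of probability measures on $\mathcal X$ is relatively compact in the topology of weak convergence if and only if it is uniformly tight. The object to work with is the sequence of Ces\`aro (occupation) averages
\begin{align*}
\mu_m := \frac{1}{m}\sum_{k=0}^{m-1} \nu_k^s(x,\cdot),
\end{align*}
where $x$ is the point supplied by the hypothesis, $\delta_x$ is the Dirac mass at $x$, and $\nu_k^s(x,\cdot)=(P_s^{\star})^k\delta_x$ is the $k$-step transition measure, so that $\int_{\mathcal X} w\,d\nu_k^s(x,\cdot)=(P_s^k w)(x)$ for every $w\in C_b(\mathcal X)$.

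First I would establish relative compactness of $\{\mu_m\}$. Since each $\mu_m$ is a convex combination of members of $\{\nu_k^s(x,\cdot)\}_{k\ge 0}$, and the latter family is uniformly tight by assumption, the averages $\{\mu_m\}$ are uniformly tight too: given $\epsilon>0$, pick a single compact set $K$ with $\nu_k^s(x,K)\ge 1-\epsilon$ for all $k$, whence $\mu_m(K)\ge 1-\epsilon$ for all $m$. Prokhorov's theorem then furnishes a subsequence $\{\mu_{m_j}\}$ converging weakly to some $\nu_{\star}\in\mathcal M(\mathcal X)$, which is automatically a probability measure.

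Finally I would verify that $\nu_{\star}$ satisfies the invariance condition \eqref{invprob}. Fix $w\in C_b(\mathcal X)$. Using the semigroup identity $\nu_{k+1}^s(x,\cdot)=P_s^{\star}\nu_k^s(x,\cdot)$, which gives $\int_{\mathcal X} P_s w\,d\nu_k^s(x,\cdot)=\int_{\mathcal X} w\,d\nu_{k+1}^s(x,\cdot)$, a telescoping of the defining sum yields
\begin{align*}
\int_{\mathcal X} P_s w\,d\mu_m - \int_{\mathcal X} w\,d\mu_m = \frac{1}{m}\left(\int_{\mathcal X} w\,d\nu_m^s(x,\cdot) - \int_{\mathcal X} w\,d\nu_0^s(x,\cdot)\right),
\end{align*}
whose right-hand side is bounded in absolute value by $2\lVert w\rVert_{\infty}/m$ and hence tends to $0$. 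Passing to the limit along $m_j$ and invoking weak convergence on both terms on the left then gives $\int_{\mathcal X} P_s w\,d\nu_{\star}=\int_{\mathcal X} w\,d\nu_{\star}$ for every $w\in C_b(\mathcal X)$, i.e.\ \eqref{invprob}, equivalently $P_s^{\star}\nu_{\star}=\nu_{\star}$.

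The main obstacle is precisely this interchange of limit and operator in the last step. Without the Feller property one could still extract a weak limit $\nu_{\star}$ of the occupation averages, but one could not guarantee $\int_{\mathcal X} P_s w\,d\mu_{m_j}\to\int_{\mathcal X} P_s w\,d\nu_{\star}$, since that convergence requires $P_s w$ itself to lie in $C_b(\mathcal X)$; the argument would then break down. Everything else --- the inheritance of tightness by the Ces\`aro averages and the telescoping estimate --- is routine once Prokhorov's theorem and the semigroup identity are in hand.
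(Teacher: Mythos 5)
Your proposal is correct and follows essentially the same route as the paper's own proof: a Krylov--Bogolyubov argument applied to the Ces\`aro averages of the transition measures $\nu_k^s(x,\cdot)$, with uniform tightness passed to the averages, Prokhorov's theorem extracting a weak limit, and the telescoping identity combined with the Feller property of $P_s$ yielding invariance. Your write-up is, if anything, a cleaner presentation of the same argument, since you isolate the telescoping bound $2\lVert w\rVert_{\infty}/m$ explicitly where the paper runs an equivalent $\epsilon$-based triangle-inequality estimate.
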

The proof is included in the Appendix. A number of further results can be shown in this setting, as described in Appendix \ref{sec:C}.

\subsection{Implications}
When the existence of a unique ergodic measure is guaranteed, either by
Theorem \ref{thm:theorem-one}, or by Theorem~\ref{thm:existence} in Appendix \ref{sec:C}, this assures predictability, as introduced in Definition~\ref{dfn:pred}:
\begin{cor}[Predictability in the Time-Invariant Setting] \label{cor:pred-inv}
Consider the feedback system depicted in Fig. \ref{system}, for some given finite-dimensional stable linear systems ${\mathcal C}$ and ${\mathcal F}$. Assume that each agent $i \in \{1,\dots,N\}$ has state $x_i^s(k)$ governed by  \eqref{eq:irf_sys}, where the affine mapping $w_{ij}^s$ is chosen at each step of time according to a Dini-continuous
probability function $p_{ij}^s(x_i^s(k),\pi(k))$, out of
\eqref{eq:func-2}, where $A_i$ is a Schur matrix and for all $i$, $\pi(k)$, and for each fixed $s$, $\sum_j p^s_{ij}(x^s_i(k),\pi(k)) = 1$. Moreover, assume that the probabilities $p_{ij}^s$ are bounded away from zero and that the conditions of Theorem \ref{thm:existence} hold for the process thus defined. Then, the 
feedback loop ensures {\em predictability} of each agent's dynamics, i.e., for each agent $i$, there exists a constant $\overline{r}_i$ such that 
\begin{equation}\label{eq:pred}
\lim_{k\to \infty} \frac{1}{k+1} \sum_{j=0}^k x^s_i(j) = \overline{r}_i, \quad \textrm{a.s}.
\end{equation}
\end{cor}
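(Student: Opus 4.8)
The plan is to recognise the closed-loop dynamics as an iterated function system (Definition~\ref{dfn:ifs}) on a suitably augmented state space, to invoke existence and uniqueness of its invariant measure, and then to extract the almost-sure convergence of the Cesàro averages in~\eqref{eq:pred} from a strong law of large numbers for such systems. Concretely, I would first absorb the internal states of the stable controller $\mathcal C$ and the stable filter $\mathcal F$, together with the agent states $x_i^s$, into a single augmented state living in a closed subset $\mathcal X \subseteq \R^n$. Because each $A_i$ is Schur and both $\mathcal C$ and $\mathcal F$ are stable, the composite closed-loop maps are (jointly) contractive on average in an adapted norm, and the forward orbits remain in a bounded, hence compact, invariant region on account of the bounded affine offsets $b_{ij}$. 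The Dini-continuity of the selection probabilities $p_{ij}^s$, together with the hypothesis that they are bounded away from zero, then places the system within the regularity class for which the IFS machinery applies.

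The second step is to pin down the unique invariant measure. Existence follows from Theorem~\ref{thm:existence}, whose uniform-tightness hypothesis is assumed to hold; uniqueness follows from the contractive-on-average structure combined with the probabilities being bounded away from zero, exactly as in Theorem~\ref{thm:theorem-one}. Denote this unique measure by $\nu_\star \in \mathcal M(\mathcal X)$. Uniqueness is precisely the ergodicity we need: it is what makes the long-run statistics independent of the initial state $x_i^s(0)$.

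The third step is to apply the ergodic theorem (a strong law of large numbers for iterated function systems with place-dependent probabilities): for every bounded continuous $f$ and every admissible initial point,
\[
\lim_{k\to\infty} \frac{1}{k+1}\sum_{j=0}^k f\!\left(x^s(j)\right) = \int_{\mathcal X} f \, d\nu_\star \quad \textrm{a.s.}
\]
I would then take $f$ to be the projection onto agent $i$'s coordinate, i.e.\ $f(x)=x_i$, which is Lipschitz and, since the orbit is confined to a compact set, bounded there. Setting
\[
\overline{r}_i := \int_{\mathcal X} x_i \, d\nu_\star
\]
delivers exactly~\eqref{eq:pred}, with $\overline{r}_i$ a constant that does not depend on $x_i^s(0)$, which is the assertion of predictability (Definition~\ref{dfn:pred}).

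The main obstacle will be verifying the hypotheses of the ergodic theorem in the coupled setting rather than for a single isolated agent: average-contractivity must be established for the joint map that interleaves the agent dynamics with the controller and filter feedback, and one must confirm that the projection observable $f(x)=x_i$ is admissible despite being a priori unbounded on all of $\R^n$ — this is handled by the compact forward-invariant region guaranteed by the Schur and stability assumptions. A secondary subtlety is upgrading ``for $\nu_\star$-almost every initial point'' to ``for every initial point,'' which is where the uniqueness of $\nu_\star$ (and not merely its existence) is essential.
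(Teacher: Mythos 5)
Your proposal is correct and follows essentially the same route as the paper: the paper's (largely implicit) argument for this corollary, made explicit in its proof of the time-varying analogue (Corollary \ref{cor:pred}), likewise combines existence of an invariant measure via Theorem \ref{thm:existence}, uniqueness via the average-contractivity of the Schur/stable closed-loop maps (the coupling argument of Theorem \ref{thm:coupling}), and Elton's ergodic theorem for iterated function systems with place-dependent probabilities to obtain the almost-sure Ces\`aro limit independent of initial conditions. Your additional care with the augmented state space, the compact forward-invariant region, and the coordinate-projection observable only fills in details the paper leaves implicit.
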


Corollary \ref{cor:pred-inv} ensures that, under the mild assumptions of Theorem \ref{thm:theorem-one}, the participants' trajectories still couple for different initial conditions; that is, the predictability still holds. As stated before, such a property is important in practical social sensing problems\footnote{This property is also desirable in resource-sharing problems \cite{Fioravanti18}.}, as the central authority thus ensures a predictable task allocation.

In turn, predictability allows for fairness, as introduced in Definition~\ref{dfn:fair}, albeit under strict conditions suggesting that the agent's behaviour is symmetric in some sense and that their initial states are the same:

\begin{cor}[Fairness in the Time-Invariant Setting] \label{cor:fair}
Consider the feedback system depicted in Fig. \ref{system}, and the same conditions as in Corollary \ref{cor:pred-inv}. 
If, in addition:
\begin{itemize}
\item  if the agents' states evolve from a uniform initial state, that is, if there exists a constant $c$ such that $x_i(0) = c$ for all agents $i$,
\item the state-dependent probabilities 
\eqref{eq:time-invariant-p} are uniform, i.e.,
there exists family of functions $
\{q_{\sigma}(x)\}_{\sigma=1}^{N}$,
$q_\sigma: \mathcal X\to [0,1]$ such that for all agents $i=1,2 \ldots N$ and all $x \in \mathcal X$,
$p^i_\sigma(x) = q_\sigma(x)$,
\end{itemize}
then the feedback loop ensures {\em fairness} of the agents' dynamics, that is, there exists a constant $\overline{r}$ such that 
\begin{equation}\label{eq:fair}
\lim_{k\to \infty} \frac{1}{k+1} \sum_{j=0}^k x^s_i(j) = \overline{r}\quad\textrm{a.s.}
\end{equation}
for all agents $i$.
\end{cor}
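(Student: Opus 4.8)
The plan is to obtain fairness as an almost immediate consequence of predictability, once the two additional hypotheses are used to force the agents' marginal processes to be statistically identical. The starting point is Corollary~\ref{cor:pred-inv}, which already supplies, for each agent $i$, a constant $\overline{r}_i$ with
\begin{equation*}
\lim_{k\to\infty}\frac{1}{k+1}\sum_{j=0}^{k}x_i^s(j)=\overline{r}_i\quad\textrm{a.s.},
\end{equation*}
and, crucially, guarantees that this limit is independent of the initial state $x_i(0)$. The task therefore reduces to showing that, under the uniformity assumptions, all the $\overline{r}_i$ collapse to a single value $\overline{r}$.

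First I would note that each agent's marginal process is a Markov chain governed by a transition operator of the form \eqref{transii}, assembled from the transformations $\{w_\sigma\}$ and the probability functions $\{p^i_\sigma\}$. The set of admissible behaviours $\{w_\tau\}_\tau$ in Algorithm~\ref{meta} is common to all agents, and the second hypothesis imposes $p^i_\sigma(x)=q_\sigma(x)$ for every agent $i$. Consequently the per-agent transition operators, and hence their duals $P_i^\star$, are literally identical across $i$. Under the hypotheses inherited from Corollary~\ref{cor:pred-inv} these operators admit a \emph{unique} invariant measure, which I call $\nu^\star$; since the operator does not depend on $i$, uniqueness forces the agent-specific invariant measures to satisfy $\nu_i^\star=\nu^\star$ for all $i$.

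Next I would identify the time-average limit with the mean of the invariant measure. The uniqueness of the invariant measure for these contractive iterated function systems yields unique ergodicity, so the ergodic theorem applies to the coordinate map $f(x)=x$ (which is $\nu^\star$-integrable because $\mathcal X$ is bounded in the relevant coordinates), giving $\overline{r}_i=\int_{\mathcal X}x\,d\nu_i^\star$ almost surely. Combining this with $\nu_i^\star=\nu^\star$ produces $\overline{r}_i=\int_{\mathcal X}x\,d\nu^\star=:\overline{r}$, a single constant independent of $i$, which is exactly \eqref{eq:fair}. The uniform-initial-state hypothesis $x_i(0)=c$ is, strictly speaking, redundant for the \emph{value} of $\overline{r}$, since predictability has already removed the dependence on $x_i(0)$; its role is to render the agents exchangeable at every finite time as well, the natural setting in which statistical-parity fairness is phrased.

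The main obstacle I anticipate is the coupling of the agents through the broadcast signal $\pi(k)$: the controller observes the filtered aggregate $\hat y$, so $\pi(k)$ is a shared, state-dependent input that could in principle break the per-agent symmetry. The resolution is that every agent receives the \emph{same} signal $\pi(k)$ and selects according to the \emph{same} family $q_\sigma$, so the closed loop is invariant under any permutation of the agent indices. It is this permutation symmetry, together with the uniqueness of the invariant measure, that licenses the identification $\nu_i^\star=\nu^\star$; checking that the symmetry genuinely survives the feedback through $\mathcal F$ and $\mathcal C$ is the one step that demands care rather than routine computation.
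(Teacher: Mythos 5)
Your proposal is correct and considerably more explicit than the paper's own proof, which consists of a single sentence asserting that fairness ``follows immediately from the Markov property of the iterated function system with state-dependent probabilities \eqref{transii}.'' What you do differently: you route the argument through uniqueness of the invariant measure and the ergodic theorem --- predictability (Corollary \ref{cor:pred-inv}) supplies the almost-sure constants $\overline{r}_i$, the two uniformity hypotheses make the dynamics permutation-invariant, and uniqueness forces the invariant measure to be exchangeable, so the per-agent marginals and hence their means coincide. This buys a genuine justification for the step the paper waves through, and your final paragraph correctly isolates the only delicate point: the agents are coupled through the broadcast signal $\pi(k)$, so symmetry must be argued at the level of the closed loop, not agent by agent. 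One caveat: your intermediate claim that each agent's marginal process is ``a Markov chain governed by a transition operator of the form \eqref{transii}'' is not literally true in closed loop --- the selection probabilities depend on $\pi(k)$, hence on the filter and controller states and on all other agents --- so the per-agent operators $P_i^{\star}$ are not well-defined objects in isolation, and the clean identification $\nu_i^{\star}=\nu^{\star}$ cannot be made agent by agent. The repair is exactly the permutation-symmetry argument you give at the end: the joint chain (all agents together with the states of $\mathcal C$ and $\mathcal F$) is Markov and invariant under relabelling of agents when $p^i_\sigma = q_\sigma$ and $x_i(0)=c$ for all $i$; its unique invariant measure is therefore exchangeable, and $\nu_i^{\star}$ should be read as the $i$-th one-agent marginal of that joint measure. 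With that reading, your argument is sound and strictly more informative than the paper's.
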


\begin{proof}[Proof of Corollary \ref{cor:fair}]
Fairness follows immediately from the Markov property of the iterated function system with state-dependent probabilities \eqref{transii}. 
\end{proof}

\begin{rem}
Throughout Corollaries
\ref{cor:pred-inv}--\ref{cor:fair}, Dini's condition on the probabilities may be replaced by simpler, more conservative assumptions, such as Lipschitz or H\"older continuity \cite{BarnsleyDemkoEltonEtAl1988}.
\end{rem}

%The results presented in Theorem \ref{thm:coupling} and in 

In Appendix~\ref{sec:D}, we generalise these corollaries to the time-varying setting. 
The question whether further generalisations of fairness, such as $\epsilon$-fairness of Definition \ref{dfn:eps-fair}, allow for less strict conditions on the initial state, are most intriguing, but left open.

\section{THE SEARCH FOR MISSING ENTITIES}
\label{section-algorithm}
We are now able to showcase the application of our abstract algorithmic schema (cf. Algorithm 1) in the context of searching for missing entities utilising a network of parked cars. Algorithm \ref{algo01} specialises Algorithm 1 as follows: abstract agents are specialised to cars; the state is composed of the internal states of the cars ($x^i$) and the numbers of cars parked in their vicinity ($N^i$), as well as the possible states of the controller ($\mathcal C$) and filter ($\mathcal F$). The state $x_i(k)$ at time $k$ is in the set $\{0,1\}$, which models whether agent $i$ participates in the search ($x_i(k) = 1$) or not ($x_i(k) = 0$).

The abstract sets of state-to-state functions $w$ of Algorithm 1 are replaced with three possible behaviours $f_f, f_s, f_m$ of the cars, corresponding to a few, some, and many cars parked in their vicinity, as depicted in Fig. \ref{probabilitycurves_melbourne}.
The abstract agent- and state-dependent probability $\mathbf{p}^i(x^i(t))=(p_1^i(x^i(t)),\dots, p_N^i(x^i(t)))$ of choosing a particular abstract $w$ at time $t$ is specialised to an agent- and state-dependent probability $( p_f^i(t), p_s^i(t), p_m^i(t) )$ of the three possible behaviours. 
These probabilities may depend on the number of neighbouring vehicles, but must satisfy the conditions of Theorem \ref{thm:theorem-one}.
Indeed, clusters of cars can cooperate and take turns to cover one area, whereas a sole car on a street must be almost always on. 

Finally, $k$ is a counter of the samples drawn using any of the three-vectors, since the most recent signal is broadcast. Another counter $h$ counts the number of signals broadcast, since the most recent perturbation of the states. Finally, $s$ is the counter of the perturbations. Time is hence captured by a triple $(s, h, k)$, considered lexicographically.

\begin{algorithm}
\caption{A specialisation of Algorithm 1 for the search for a missing entity.}% A practical algorithm for the search for a missing entity, following the algorithm schema of Algorithm 1. }
\KwData{Number of agents $N$;  initial state $x^i(0)\in \mathcal X$ for each agent $i$; a set of possible behaviours $\{f_f, f_s, f_m\}$ valid for any agent, based on the few, some, or many cars in the vicinity; number $t$ of time steps between perturbations; time horizon $t \le T$ of time steps; a bound $\delta$ on the rate of change of the number $N_i$ of cars parked in the vicinity of car $i$, within $t$ time steps.}
\KwResult{Missing entity location {\bf or} fail alert.}
{\bf Initialise} $s \leftarrow 0$; $h \leftarrow 0$; $\pi(0) \leftarrow 0$; $x_i(0) \leftarrow 0$; $\hat y(0) \leftarrow 0$\;
\While{$s \cdot h \leq T$}{
\While{$h \leq t$}{
%Reset counter $k = 0$ \;
\For{each car $i$}{
Car $i$ {\bf determines} the number $N_i(st +h)$ of neighbouring cars\;
Car $i$ {\bf decides} whether $N_i(st +h)$ corresponds to {\em few}, {\em some} or {\em many} neighbouring cars\;
Car $i$ {\bf sets} $\mathbf{p}^{i} = ( p_f^i(st +h), p_s^i(st +h), p_m^i(st +h) )$ corresponding to the behaviours $\{f_f, f_s, f_m\}$ for {\em few}, {\em some} or {\em many} neighbouring cars \;
Car $i$ {\bf ``tosses a coin''} and updates state $x^i(st +h)$ using one of $\{f_f, f_s, f_m\}$, chosen with probabilities 
$\mathbf{p}^{i}$ \;
%$( p_f^i(st +h), p_s^i(st +h), p_m^i(st +h) )$, respectively \;
\If{$x^i(st +h) = 1$}{
Car $i$ {\bf scans} for missing entity using RFID \; 
\If{the missing entity is located}{
Car $i$ {\bf returns} position of the missing entity to the requester of the search \;
}
}
%$k \leftarrow k + 1$\;
}
Central authority {\bf observes} filtered aggregate state $\hat y(st +h)$, where the filter ${\cal F}$ is possibly not known a priori \;
Central authority {\bf computes} the error $e(st +h)$ \;
Central authority {\bf broadcasts} signal $\pi(st +h)$ computed using some controller ${\cal C}$ and increments $h$ to $h +1$\;
}
The environment {\bf perturbs} the numbers $N_i$, i.e., the numbers of cars parked in the vicinity, such that $| N^i((s+1)t) - N^i(st + h) | \le \delta $ and increments $s$ to $s + 1$. 
}
The cars have {\bf failed} to locate the entity within the time horizon \;
{\bf Return} an alert to the requester of the search\;
\label{algo01}
\end{algorithm}
%\section{EXAMPLE}\label{example}
To demonstrate the performance of our algorithm, we employed Simulation of Urban MObility (SUMO) Version 1.2.0.
SUMO \cite{Krajzewicz2012} is an open-source, microscopic traffic simulation package primarily being developed at the Institute of Transportation Systems at the German Aerospace Centre (DLR).
SUMO is designed to handle large networks and comes with a ``remote control'' interface,
TraCI (short for Traffic Control Interface) \cite{Wegeneretal2008}, which allows one to adapt the
simulation and to control singular vehicles and pedestrians on the fly.
Our goal was to simulate a pedestrian walking about in an urban scenario, and to regulate the number
of parked vehicles actively searching for the pedestrian in an energy- and coverage-efficient manner using our algorithm.

\subsection{City of Melbourne test case scenario setup}
The region considered for our simulations consisted of the City of Melbourne municipality, with a boundary map obtained from \cite{CityOfMelbourneBoundaryMap}; cf. Fig. \ref{mapofregion}. A dataset containing spatial polygons representing the on-street parking bays across the city was obtained from \cite{ParkingBaysMap}. A total of 24,067 on-street parking spaces were imported to our SUMO network as polygons from this dataset.

\begin{figure}[th]
\centering
\includegraphics[width=0.95\columnwidth]{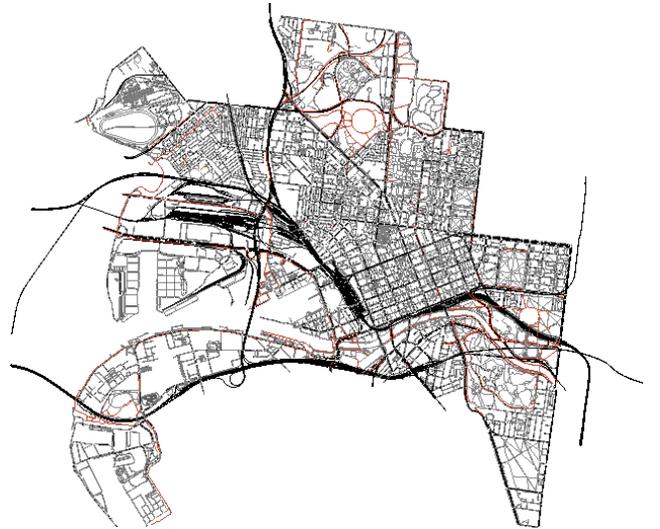}
\caption{A map of the City of Melbourne, as imported from OpenStreetMap for our use in SUMO simulations.}\label{mapofregion}
\end{figure}

To generate random walks for the pedestrian, we utilised the TraCI function \emph{traci.simulation.findIntermodalRoute}. In particular, at the commencement of each walk, a random origin and destination lane were selected from the list of all possible lanes on the network for which pedestrians were permitted on, and these origin and destination links were then provided as input to the TraCI function which generated the route. The maximum walking speed for the pedestrian was set at SUMO's default of 1.39m/s. We used SUMO's \emph{striping} pedestrian model \cite{SUMO-striping} as the model for how the person otherwise interacted with the map.

Another parameter in our experiment was the proportion of parking spaces in each simulation that would have cars parked in them that were capable of participating in the search. We elected for each parking space (out of the 24,067 total parking spaces) to have a 50\% chance of being inhabited by a vehicle capable of participating in the search.  Thus, at the beginning of each simulation, a ``coin'' was flipped for each of the 24,067 parking spaces. The result of this ``coin flip'' was compared to the fifty percent value to determine whether that parking space would be inhabited by a parked vehicle capable of participating in the search or not, over that particular simulation.  Parking space assignments for vehicles then remained constant for the duration of a simulation, and parked, participating vehicles were ``Switched On'' or ``Switched Off'' according to Algorithm \ref{algo01}. At the beginning of each search, the proportion of participating vehicles that were initially ``Switched On'' was set at 30\%. We chose our target number $r$ of ``Switched On'' vehicles to be 7,200.

For our probability models, we employed the use of logistic functions which are illustrated in Fig. \ref{probabilitycurves_melbourne}. We placed a circle with a radius of twenty metres around each parked vehicle capable of participating in the search, and let the number of other parked vehicles (capable of participating in the search, and) residing within this circle, equate to the number of neighbours that the vehicle at the centre of the circle had.  For simplicity, we assume that $\hat y = y$ (that is, the filter ${\cal F}$ provides a perfect estimate for the resource consumption)\footnote{Moving average schemes are also standard choices.}. We also consider a simple controller model given by the difference equation
\begin{equation}\label{eq:simpl-cntrlr}
\pi(k) = \gamma \pi(k - 1) + \kappa \left[e(k) - \alpha e(k-1) \right],
\end{equation}
for all $k \in \N$, in which $\alpha,\gamma,\kappa \in \R$. This model includes, as particular cases, classical lead,
lag and PI controller structures \cite{Franklin,Franklin_dig}. For this particular example, 
 we let $\alpha=-4.01$, $\gamma=0.99$ and $\kappa=0.1$ in \eqref{eq:simpl-cntrlr}.  We set each vehicle's RFID polling rate (i.e., the frequency at which a car's RFID system is sampling when the vehicle is ``Switched On'') as ``Always On'', meaning that once ``Switched On'', a vehicle is always polling as opposed to doing periodic, timed reads. We set a circular RFID field around each car with a radius of six metres. Moreover, we assumed that once a pedestrian entered this field, and if the vehicle was ``Switched On'', then the pedestrian would be detected.  In other words, in this paper, we neglect some of the more complicated phenomena typically associated with RFID, such as the effects of tag placement, antenna orientation, cable length, reader settings, and environmental factors such as the existence of water or other radio waves \cite{RFIDInsider}.

\begin{figure}[t!]
\centering
\includegraphics[width=0.9\columnwidth]{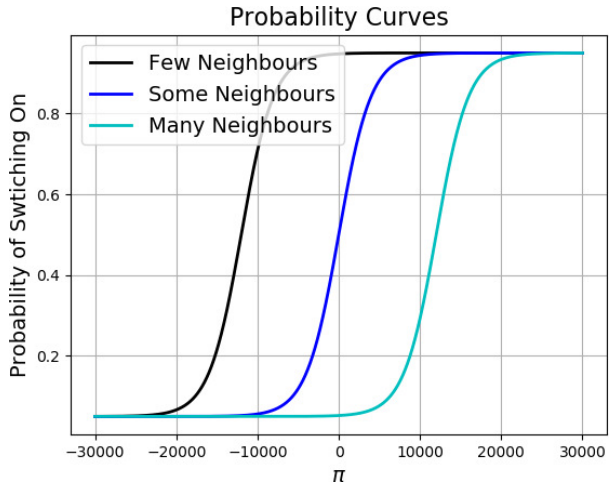}
\caption{Logistic functions used to model the possible behaviours  $\{f_f, f_s, f_m\}$ in Algorithm~\ref{algo01}.}\label{probabilitycurves_melbourne}
\end{figure}

For each simulation, then, our goal was to set the person down on a random edge, and have them walk until either: (i) they were detected by a parked vehicle that was ``Switched On'' and thus actively searching at the same time as when the pedestrian was passing by; or (ii) thirty minutes had transpired and no detection event had occurred. We permitted thirty minutes to lapse before a ``fail-to-detect'' event was recorded, keeping in mind that quickly finding a missing and potentially stressed person, and returning them to their home, for instance, is ideal. All simulations had time-step updates of 1s, while our control signals were sent only every 20s. For our test case scenario, 100 simulations were performed in total.

\subsection{Numerical illustrations}\label{sec:results}
To gather some preliminary data, we first permitted a small sample of ten simulations to run for a full thirty minutes, with no pedestrian placement yet.  From these simulations, Fig. \ref{updatesevery20s} demonstrates that regulation of the system, such that approximately 7,200 parked vehicles were ``Switched On'' at any point in time, was achieved quite rapidly. Specifically, the blue line in the figure indicates the mean number of vehicles "Switched On' versus time (from the ten sample simulations); while the red shaded area indicates one sample standard deviation each side of the mean. Fig. \ref{controlsignalversustime} illustrates the evolution of the mean control signal $\pi$ over time. (Again, the red shaded area indicates one sample standard deviation each side of the mean.) Notice that $\pi$ could then be used in association with Fig. \ref{probabilitycurves_melbourne}, along with the known number of neighbours that a vehicle had, to determine the probability of that vehicle being ``Switched On'' over the next appropriate time interval.

\begin{figure}[t!]
\centering
\includegraphics[width=0.95\columnwidth]{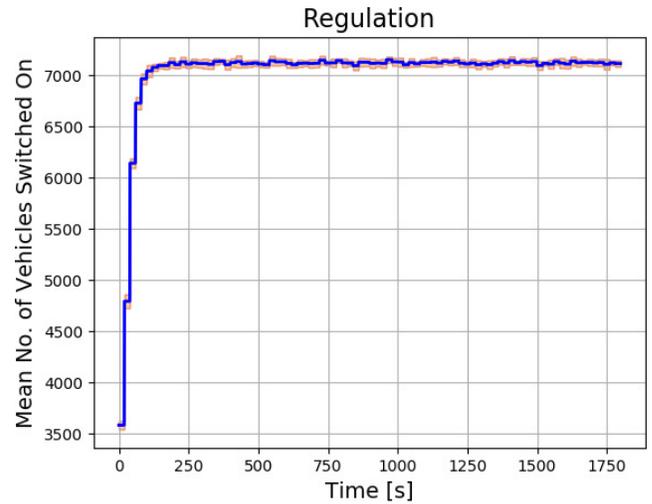}
\caption{The blue line indicates the mean number of vehicles "Switched On' versus time (from ten sample simulations, each regulating the number of ``Switched On'' vehicles to 7,200 at any point in time), while the red shaded area indicates the area within one standard deviation from the mean number of vehicles.}\label{updatesevery20s}
\end{figure}

\begin{figure}[t!]
\centering
\includegraphics[width=0.95\columnwidth]{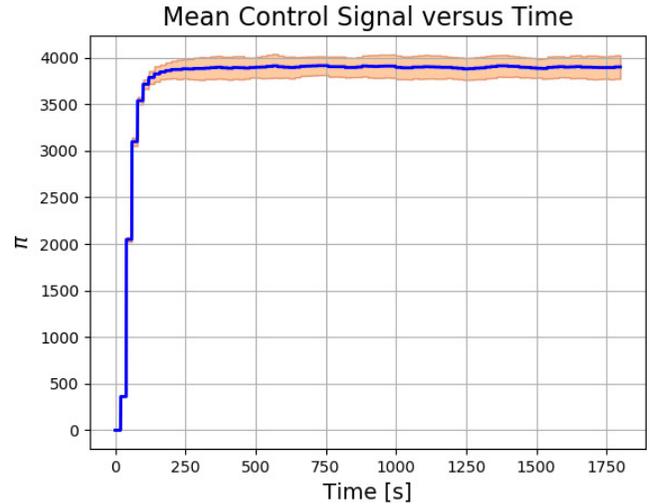}
\caption{Weak convergence of the mean control signal, $\pi$, over time. The red shaded area indicates the area within one  standard deviation away from the mean control signal.}\label{controlsignalversustime}
\end{figure}

Next, we performed our simulations proper, where a pedestrian was inserted onto the map at the beginning of each simulation, and these ran until either: (i) the pedestrian was detected by a parked vehicle that was ``Switched On'' and thus actively searching at the same time as when the pedestrian was passing by; or (ii) thirty minutes had lapsed and no detection event had occurred.  The data collected from our experiment comprised of: (i) the average time taken (in minutes) until detection of the missing entity occurred (provided that the detection occurred within thirty minutes from the beginning of an emulation, else a fail result was recorded); and (ii) the total number of times that fail results were recorded over the entirety of the experiment.  To reiterate, 100 simulations in total were conducted during our experiment. The results were as follows: (a) Average Detection Time $=$ 5.30 minutes; and (b) Failed to Detect $=$ 6 times out of 100 simulations. In other words, the pedestrian was not detected within a thirty-minute time frame, 6\% of the time.  For the other 94 cases, the pedestrian was detected, on average, in approximately five minutes.

\section{CONCLUSIONS AND FUTURE WORK}\label{conclusions}
We have considered the notion of predictability and 
a notion of fairness in time-varying probabilistic models of social sensing.
This could be seen as a contribution to the growing literature \cite{Fioravanti18,narasimhan2020pairwise,zhou2020fairness} on fairness beyond machine learning, as well as an addition to the theory of social sensing.

A number of theoretical questions arise: what other conditions assure fairness in the
sense of statistical parity (Definition \ref{dfn:fair})? What other notions of fairness could there be, other than Definitions~\ref{dfn:fair}--\ref{dfn:eps-fair}?
We believe these could spur a considerable interest across both Social Sensing
and Control Theory.

In our application, we have considered dynamic parking, which requires such time-varying 
probabilistic models.
We envisage a number of ways forward regarding improving our experimental setup, 
including performing more simulations in further cities worldwide.

There could also be a number of other applications. For instance, during the current 
COVID-19 pandemic \cite{rashid2020covidsens}, many governments considered the mandatory participation in a tracing 
scheme that would be sufficient to contain a contagion, and the option of 
invading the privacy of individuals in such a sensing scheme. % rather than ``at will''. 
One could also see testing as a means of social sensing and consider a stochastic model \cite{marecek2020screening} thereof.
In such a setting, our notion of fairness may also be worth considering.

\small 

\vspace{1\baselineskip}
\par\noindent
\parbox[t]{\linewidth}{
\noindent\parpic{\includegraphics[height=2 in,width=1 in,clip,keepaspectratio]{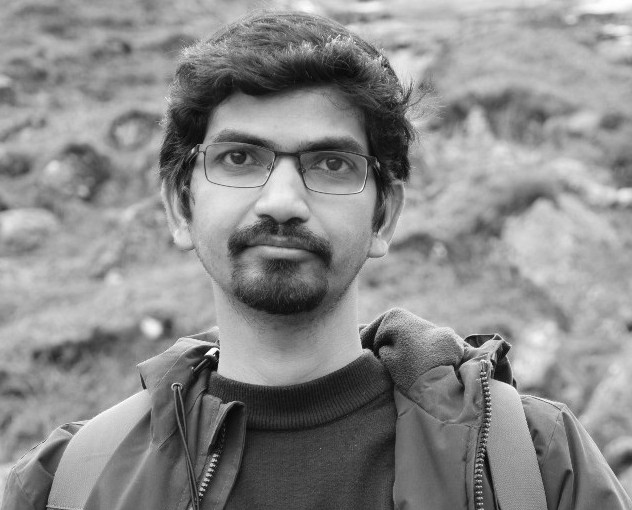}}
\noindent {\bf Ramen Ghosh}\
received his Bachelor of Science in Mathematics (Honours) at the University of Calcutta, and his Master of Science in Mathematics at Chennai Mathematical Institute, Chennai, India, and his Master of Technology in Mathematics and Computing at the Indian Institute of Technology, Patna, India, respectively in 2009, 2011 and 2017. He is currently a Ph.D. student in Control Engineering and Decision Science at the School of Electrical and Electronic Engineering at University College Dublin, Ireland. His current research interests include iterated function systems, stochastic processes, and dynamical systems arising in control theory.
}
\vspace{1\baselineskip}
\par\noindent
\parbox[t]{\linewidth}{
\noindent\parpic{\includegraphics[height=1.5in,width=1 in,clip,keepaspectratio]{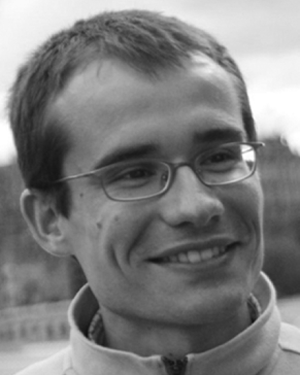}}
\noindent {\bf Jakub Mare{\v c}ek}\
received his first two degrees from Masaryk University in Brno, the Czech Republic, and his Ph.D.
degree from the University of Nottingham, Nottingham, U.K., in 2006, 2009, and 2012, respectively.
He has worked in two start-ups, at ARM Ltd., at the University of Edinburgh, at the University of Toronto, 
at IBM Research -- Ireland, 
and at the University of California, Los Angeles.
He is currently a faculty member at the Czech Technical University in Prague, the Czech Republic.
He designs and analyses algorithms for optimisation and control problems across a range of application domains,
including power systems, transportation, and robust statistics.
}
\vspace{4\baselineskip}
\par\noindent
\parbox[t]{\linewidth}{
\noindent\parpic{\includegraphics[height=1.5in,width=1 in,clip,keepaspectratio]{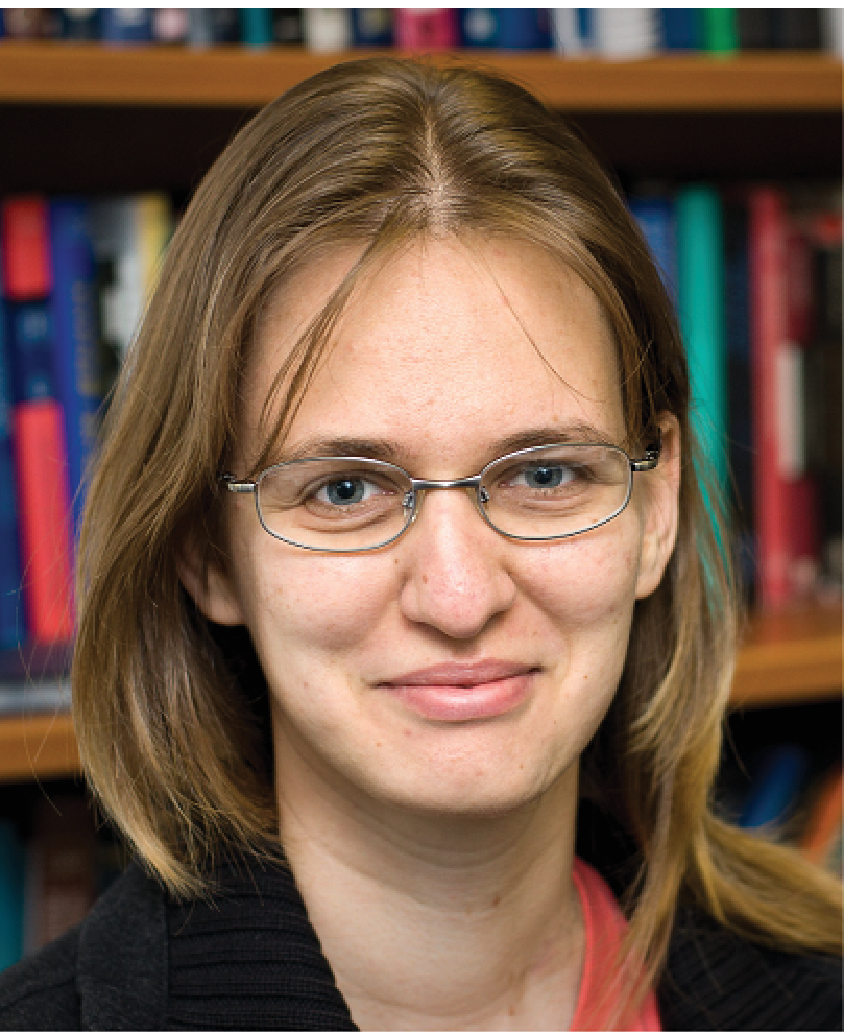}}
\noindent {\bf Wynita M. Griggs}\
received her Ph.D. degree in Engineering from the Australian National University in Canberra, Australia, in 2007. Between 2008 and 2015, she was a Postdoctoral Research Fellow at the Hamilton Institute, National University of Ireland Maynooth in Maynooth, Ireland. From 2015 to 2019, she was a Research Scientist at University College Dublin in Dublin, Ireland. She is currently a Lecturer at Monash University in Melbourne, Australia. Her research interests include stability theory with applications to feedback control systems; and intelligent transportation systems.
}
\vspace{1\baselineskip}
\par\noindent
\parbox[t]{\linewidth}{
\noindent\parpic{\includegraphics[height=1.5in,width=1in,clip,keepaspectratio]{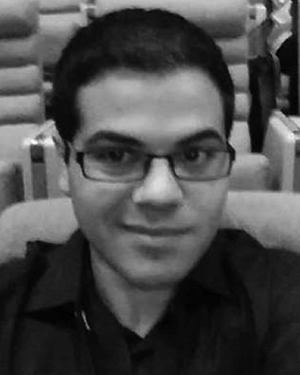}}
\noindent {\bf Matheus Souza}
obtained his BEng, MSc, Ph.D. degrees from the School of Electrical and Computer Engineering (FEEC), University of Campinas (UNICAMP).
He visited Maynooth University as a PhD visiting researcher and he worked as a Post Doctoral Research Fellow at University College Dublin.
He is now an Assistant Professor at UNICAMP and his current research interests include analysis and design of dynamical systems and mathematical optimisation with applications to smart cities.
}
\vspace{1\baselineskip}
\par\noindent
\parbox[t]{\linewidth}{
\noindent\parpic{\includegraphics[height=1.5in,width=1 in,clip,keepaspectratio]{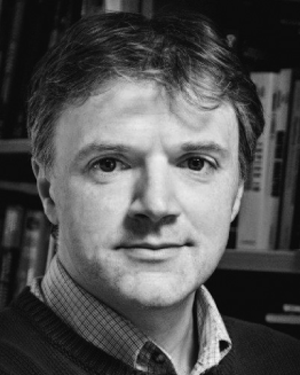}}
\noindent {\bf Robert N. Shorten}\
received his Ph.D. degree from University College Dublin, Ireland in 1996.
From 1993 to 1996, he was the Holder of a Marie Curie Fellowship with Daimler-Benz Berlin, Germany,
where he conducted research in the area of smart gearbox systems. Following a brief period with the Centre for Systems Science,
Yale University, New Haven, CT, USA, where he was involved with research with Prof. K. S. Narendra, he returned to Ireland as the Holder of a
European Presidency Fellowship in 1997. He is the co-founder of the Hamilton Institute, National University of Ireland Maynooth, Ireland,
where he was a Full Professor until 2013. From 2013 to 2015, he was a Senior Research Manager with IBM Research -- Ireland, where he led
the control and optimization activities. He currently holds appointments at Imperial College London and University College Dublin,
where he is a Professor of control engineering and decision science.
}

\bibliographystyle{{IEEEtranN}}
\bibliography{ref}
%\clearpage 
\appendix 
\normalsize 
\subsection{An Alternative Formalisation}
\label{sec:A}
While the formalisation of Section III of the paper is perfectly valid, one could also consider an alternative formalisation. 
Therein, consider $N$ agents aiming to estimate the evolution of state transitions (for finite state space; or an evolution of a measure for uncountable state space) of a state-space Markov chain $\{X_k\}_{k\ge 0}$. Let $\mathcal X=\{1,2,\dots, n\}$ when the underlying state-space is finite and let $\mathcal X$ be a closed subset of $\mathbb R^n$ with a metric $\rho$ on it such that $(\mathcal X, \rho)$ forms a complete, separable metric space when the underlying state-space is uncountable. Let $\nu_0$ be the initial distribution of the Markov chain. Each agent plays its role once in a predetermined (but in a random fashion) sequential order indexed by $k=0,1,2,\dots$ which can be viewed as a discrete time instant. Then:

\begin{dfn}[Social sensing]\label{dfn:socialsensing}
The social sensing protocol by a group of $N$ agents proceeds as follows: 
\begin{itemize}
\item Initially, let the chain start from some $\mathcal{A}\in \mathcal B(\mathcal X)$ i.e., $\mathbb P (X_0\in \mathcal{A})=\nu_0(\mathcal{A})$.
\item  At time $k$, $\mathcal C$ broadcasts a signal $\pi(k)$ and the agents change their state state from $X_{k-1}$ to $X_k$ such that 
\begin{align*}
\mathbb P(X_k\in \mathcal A| X_{k-1})&=\int_{\mathcal X} \nu(X_{k-1}, \mathcal A) d\nu_{k-1}(x_{k-1})\\
&=\nu_k(\mathcal A)
\end{align*}
for any event $\mathcal A\in \mathcal B(\mathcal X)$, and the integral is in the sense of Lebesgue with respect of probability measure $\nu_{k-1}$ and $\nu(x,\mathcal A)= \mathbb P(X_k \in \mathcal A| X_{k-1}=x)$ is the transition kernel. 
\end{itemize}
\end{dfn}
Now at this point we should mention that since there is an inherent randomness in the reaction of each agent to the broadcast signal, the closed loop of Figure \ref{system} requires a stochastic model. A model based on an iterated function system (IFS), which is a class of discrete-time Markov Chains with an uncountable state space, considers response functions that are either absolutely continuous or Lipschitz in nature. 
If for $N$ agents, we use $f_1,\dots, f_N$ ``response functions'', then the chain will  move by the action of a single randomly chosen function out of the family. This can be stated in the following recurrent relation:
\begin{align*}
X_{k+1}=f_{\sigma_k}(X_k) \quad, k=0,1,2,\dots, 
\end{align*}
$\sigma_0, \sigma_1,\dots$ are i.i.d discrete random variables taking values in $[1,N]$.
Now, for the evolution of this social sensing process we define:
\begin{dfn}[A linear operator for social sensing]\label{dfn: lin-op-fr-so-sens}
Let us consider an operator $P$ on the space of all bounded continuous functions $C_b(\mathcal X)$, for social sensing as follows:
\begin{align*}
Pf(x)= \mathbb E \big[f\left(X_{k+1}\right)| X_k=x\big]
\end{align*}
whose dual $P^{\star}$ is defined on the space of all Borel probability measure  on $\mathcal X$ denoted by $\mathcal M(\mathcal X)$ as follows:
\begin{align*}
P^{\star}\nu=\int_{\mathcal X} \mathbb P(X_k\in \mathcal A| X_{k-1}=x)\nu dx    
\end{align*}
\end{dfn}
As  stated in the introduction,  our aim  in  this paper is to regulate the task distribution of social sensing, which assures predictability and fairness. An important prerequisite for both predictability and fairness is:

\begin{dfn}[Ergodicity]\label{dfn:ergo}
Let us consider a linear operator $P$ for social sensing.
We call the social-sensing ergodic if there exists an unique $\nu_{\star}\in \mathcal M(\mathcal X)$ such that $P_{\star}\nu^{\star}=\nu_{\star}$.
\end{dfn}
which could be studied independently. 

\subsection{Proof of the Main Result}\label{sec:B}
In the following, we present the proof of the main result.
\begin{proof}[Proof of Theorem \ref{thm:purturbation}]
Letting $C_b(\mathcal X)$ denote the set of real-valued bounded continuous functions on $\mathcal X$, one can define a linear map  $P$ on $C_b(\mathcal X,\mathbb R)$ (using Definition~\ref{def:markov}):
\begin{align}\label{eq:P-action}
Pw(x):=\sum_{i=1}^{N} p_{i}(x)(w\circ w_{\sigma_{i}})(x) 
\end{align}
This operator characterizes the Markov chain. $P$ maps $C_b(\mathcal X)$ into itself, which is known as Feller property, or that $P$ is a Feller map.
Let $\mathcal M(\mathcal X)$ denote the set of Borel probability measures on $\mathcal X$. Denote the dual of the map $P$ as follows:
\begin{align}\label{eq:dual-1}
P^{\star}: \mathcal M(\mathcal X)\to \mathcal M(\mathcal X),
\end{align}
with the requirement 
\begin{align}\label{eq:dual-requ}
\int_{\mathcal X} w d (P^{\star} \nu)=\int_{\mathcal X} (P w) d \nu.
\end{align}
Such a dual map $P^{\star}$ is well defined by the Riesz representation theorem.
Now we show that $P^{\star}$ is a contraction in Wasserstein-$1$, i.e., in $\mathcal W_1$ metric with some contraction factor $r\in (0,1)$. For any two $\nu_1, \nu_2\in \mathcal M(\mathcal X)$, we have:
\begin{align}\label{eq:wass-contraction}
&\mathcal W_1(P^{\star}\nu_1, P^{\star} \nu_2)\nonumber\\
&=\sup\limits_{w\in \mathcal L_1}\left [ \int w d(P^{\star}\nu_1)- \int wd(P^{\star}\nu_2)\right ]\nonumber\quad \left(\because \eqref{eq:wass-dist}\right)\\
&=\sup\limits_{w\in \mathcal L_1}\left [ \int (Pw)d\nu_1- \int (Pw)d\nu_2\right ]\nonumber\quad\left(\because \eqref{eq:dual-requ}\right)\\
&=\sup\limits_{w\in \mathcal L_1}\left [ \int (Pw)d(\nu_1-\nu_2)\right ]\nonumber\\
&=r\cdot\sup\limits_{w\in \mathcal L_1}\left [ \int \left(\frac{1}{r}Pw\right)d(\nu_1-\nu_2)\right ]\nonumber\\
&= r\cdot\sup\limits_{g\in \mathcal L_1} \int gd(\nu_1-\nu_2)\quad\left (\because g:=\frac{1}{r}Pw\in \mathcal L_1 \text{ as } w\in \mathcal L_1\right )\nonumber\\
&\le r\cdot\mathcal W_1(\nu_1, \nu_2).
\end{align}
Now a useful consequence of the above derived fact is: 
\begin{align}\label{eq:esti-2}
&\mathcal W_1(\nu_1^{\star}, \nu_{2}^{\star})\nonumber \\
&= \mathcal W_1(P_1^{\star}\nu_1^{\star}, P_{2}^{\star}\nu_{2}^{\star})\nonumber \quad \left(\because P_1^{\star}\nu_1^{\star}=\nu_1^{\star}, P_2^{\star}\nu_2^{\star}=P_2^{\star}\nu_2^{\star} \right)\\
& \le \mathcal W_1(P_1^{\star}\nu_1^{\star}, P_1^{\star}\nu_{2}^{\star})+\mathcal W_1(P_1^{\star}\nu_{2}^{\star}, P_{2}^{\star}\nu_{2}^{\star})\nonumber \left(\text{Triangle inequality}\right)\\
& \le  r\mathcal W_1(\nu_1^{\star},\nu_{2}^{\star})+\mathcal W_1(P_{1}\nu_{2}^{\star}, P_{2}\nu_{2}^{\star})\nonumber \quad \left( \because \eqref{eq:wass-contraction}\right)\\
& \Rightarrow \mathcal W_1(\nu_1^{\star}, \nu_{2}^{\star}) \le \frac{(\mathcal W_1P_{1}\nu_{2}^{\star}, P_{2}\nu_{2}^{\star})}{1-r}
\end{align}
Now, notice that:
\begin{align}\label{eq:esti-3}
&\Big\|{P_1w(x)-P_{2}w(x)} \Big\|\nonumber\\
& \overset{\eqref{eq:P-action}}{=}\Big\|\sum\limits_{\sigma_{k}}p_{\sigma_{k}}(x)(w\circ  w_{\sigma_{k}})(x)-\sum\limits_{\sigma_k}p'_{\sigma_{k}}(x)(w\circ w'_{\sigma_{k}})(x)\Big\|\nonumber\\
&=\Big\|\sum\limits_{\sigma_{k}}p_{\sigma_{k}}(x)\left[(w\circ  w_{\sigma_{k}})(x)-(w\circ w'_{\sigma_k})(x)+(w\circ w'_{\sigma_k})(x)\right]\nonumber\\
& -\sum\limits_{\sigma_k}p'_{\sigma_{k}}(x)(w\circ w'_{\sigma_{k}})(x)\Big\|\nonumber\\
&=\Big\|\sum\limits_{\sigma_{k}}p_{\sigma_{k}}(x)\left[(w\circ  w_{\sigma_{k}})(x)-(w\circ w'_{\sigma_k})(x)\right]+ \nonumber\\
& \sum\limits_{\sigma_{k}}p_{\sigma_{k}}(x)(w\circ w'_{\sigma_k})(x)-\sum\limits_{\sigma_k}p'_{\sigma_{k}}(x)(w\circ w'_{\sigma_{k}})(x)\Big\|\nonumber\\
&=\Big\|\sum\limits_{\sigma_{k}}p_{\sigma_{k}}(x)\left[(w\circ  w_{\sigma_{k}})(x)-(w\circ w'_{\sigma_k})(x)\right]+ \nonumber\\
& \sum\limits_{\sigma_{k}}\left[p_{\sigma_{k}}(x)-p'_{\sigma_{k}}(x)\right](w\circ w'_{\sigma_k})(x)\Big\|\nonumber\\
&\le \Big\| {\sum\limits_{\sigma_{k}}p_{\sigma_{k}}(x)\left[(w\circ w_{\sigma_{k}})(x) -(w\circ w'_{\sigma_{k}})(x)\right]}\Big\|+\nonumber\\
& \Big\|\sum\limits_{\sigma_{k}}\left[p_{\sigma_{k}}(x)-p'_{\sigma_{k}}(x)\right](w\circ w'_{\sigma_{k}})(x)\Big\|\nonumber\\
&\le\sum\limits_{\sigma_{k}} \Big\|p_{\sigma_{k}}(x)\Big\|\cdot \Big\|\left[(w\circ w_{\sigma_{k}})(x) -(w\circ w'_{\sigma_{k}})(x)\right]\Big\|+\nonumber\\
&\sum\limits_{\sigma_{k}}\Big\|\left[p_{\sigma_{k}}(x)-p'_{\sigma_{k}}(x)\right]\Big\|\cdot \Big\|(w\circ w'_{\sigma_{k}})(x)\Big\|\nonumber\\
&\le r'\sum\limits_{\sigma_{k}} p_{\sigma_{k}}(x)\Big\|w_{\sigma_{k}}(x) - w'_{\sigma_{k}}(x)\Big\|+\beta \eta.
\end{align}

And, then,
\begin{align*}
&\mathcal W_1(P_{1}^{\star}\nu,P_2^{\star}\nu) \nonumber\\
=&\sup\limits_{w}\int w d(P_1^{\star}\nu-P_{2}^{\star}\nu)\nonumber\\
=&\sup\limits_{w}\int (P_1w-P_{2}w)d\nu\nonumber\\
\le&\sup\limits_{x}\left (r'\sum\limits_{\sigma_{k}} p_{\sigma_{k}}(x)\Big\|w_{\sigma_{k}}(x) - w'_{\sigma_{k}}(x)\Big\|+\beta \eta\right)\nonumber\\
\le &\left (r'\sum\limits_{\sigma_k} p_{\sigma_{k}}(x)\Big\|{w_{\sigma_{k}}(x) - w'_{\sigma_{k}}(x)}\Big\|_{\infty}+\beta \eta\right).
\end{align*}
And, finally \eqref{eq:esti-1} is concluded from \eqref{eq:esti-2} and \eqref{eq:esti-3}.
\end{proof}

\section{Analysis in the time-varying case}\label{sec:C}
Next, we would like to show the existence of a certain family of measures (Theorem \ref{thm:existence})
and its uniqueness (Theorem \ref{thm:coupling}) in the time-varying case. In Theorem \ref{thm:existence}, we need:
\begin{dfn}[Uniformly tight measure; Definition 8.6.1 in Bogachev \cite{bogachev-2007}]\label{dfn:uni-tight}
An arbitrary $\mathcal M\subseteq \mathcal M(\mathcal X)$ is called uniformly tight if $\forall \epsilon >0$
there exists a compact subset $\mathcal K\subseteq \mathcal X$ such that $\nu(\mathcal K)\ge 1-\epsilon,\quad \forall \nu\in \mathcal M(\mathcal X)$.
\end{dfn}
It can be shown that on a compact metric space, any family of probability measures is uniformly tight, cf. Theorem 8.6.2 in \cite{bogachev-2007}.
Intuitively, for any other space, probability measures accumulate on compact subsets of the underlying space. We use a result due to Prokhorov \cite{Prokhorov1956ConvergenceOR} which says, if $\{\nu_n\}_{n=1}^{\infty} \in \mathcal M(\mathcal X)$ be uniformly tight sequence, then there exists a sub-sequence $\{\nu_{n_k}\}_{k=1}^{\infty}$ of $\{\nu_n\}_{n=1}^{\infty}$ and a $\nu\in \mathcal M(\mathcal X)$ such that $\{\nu_{n_k}\}_{n_k}\to \nu$ weakly. Now, with this in mind, we establish the existence of invariant measures of the Markov process described in equation  \eqref{eq:tvirf}.
Let $\mathcal B(\mathcal X)$ denote the Borel sigma-algebra on $\mathcal X$. For any Borel set $\mathcal A\in \mathcal B (\mathcal X)$ let us define $m$-step transitional probability functions, which are probability measure for each fixed $x\in \mathcal X$ and measurable function of $x$ for each fixed $\mathcal A\in \mathcal{B}(\mathcal X)$, as follows:
\begin{align}\label{eq:m-step-transi}
\nu^{s}_{m}\left(x, \mathcal A\right)=\text{ Prob }\left (x^s(m)\in \mathcal A| x^s(0)=x\right).
\end{align}
\begin{proof}[Proof of Theorem \ref{thm:existence}]
Assume that there exists at least one $x\in \mathcal X$ for which the sequence $\{\nu^s_j(x, \cdot)\}_{j=0}^{\infty}$ is uniformly tight. Then we show that there exists at least one invariant probability measure for $P_s^{\star}$. The proof is based on the Krylov-Bogoliubov \cite{Kryloff37} type argument. 
Define a sequence of probability measures which are the average over time of the $m$-step transition probabilities on $(\mathcal X, \mathcal B (\mathcal X))$ as follows for some fixed  $x\in \mathcal X$: 
\begin{align}\label{eq:avrg-trnsi-pro}
\text{ for } \mathcal A\in \mathcal B (\mathcal X),\quad 
\nu^s_m(\mathcal A)=\frac{1}{m}\sum\limits_{j=1}^{m}\nu^s_j (x,\mathcal A)
\end{align}
It is clear that this sequence is also tight, so it has a sub-sequence that converges weakly to some probability measure $\nu^s_{\star}$ on $\mathcal X$. We also have the following equality:
\begin{align}
P_s^{\star}\nu^{s}_m- \nu_m^{s}&= \frac{1}{m}\sum\limits_{j=2}^{m+1}\nu^s_j (x,\mathcal A)-\frac{1}{m}\sum\limits_{j=1}^{m}\nu^s_j (x,\mathcal A)\nonumber\\
&=\frac{1}{m}\left [\nu^s_{m+1} (x,\mathcal A)-\nu^s_1 (x,\mathcal A) \right]
\end{align}
Notice that for each fixed $x\in \mathcal X$, $\nu^s(x,\mathcal A)$ is a probability measure and the integral of $w(x)$ with respect to such measure is expressed as $\int w(y)\nu^s(x,dy)$, and the interpretation holds for any $m\in \mathbb N$ and written as $\int w(y)\nu^s_{m}(x,dy)$.
Take any $w\in C_b(\mathcal X, \mathbb R)$ such that $|w(x)|<1$. Fix an $\epsilon >0$. Weak convergence of the probability measures $\{\nu^s_m\}_{m\ge 1}$ ensures that there is a natural number $m>\frac{1}{\epsilon}$ for which
\begin{equation*}
\left |\int w(x) \nu^s_m(dx)-\int w(x) \nu^s_{\star}(dx)\right|\le \epsilon.   
\end{equation*}
Since $(P_sw)$ is continuous, we can chose large $m$ for which 
\begin{equation*}
\left |\int (P_sw)(x) \nu^s_m(dx)-\int (P_sw)(x) \nu^s_{\star}(dx)\right|\le \epsilon. \end{equation*}
Now, 
\begin{align*}
&\left|\int w(x) (P_s^{\star}\nu^s_{\star})(dx)-\int w(x)\nu^s_{\star}(dx)\right|\\
&\le \left |\int w(x) (P_s^{\star}\nu^s_{\star})(dx)-\int w(x)(P_s^{\star}\nu^s_m)(dx)\right|\\
&+\left |\int w(x) (P_s^{\star}\nu^s_m)(dx)-\int w(x)\nu^s_m(dx)\right|\\
&+ \left|\int w(x)\nu^s_m(dx)-\int w(x)\nu^s_{\star}(dx)\right|\\
&\le \left|\int (P_s w)(x)\nu^s_{\star}(dx)-\int (P_s w)(x)\nu^s_m(dx)\right|\\
&+ \frac{1}{m}\left |\int w(y)(\nu^s_{m+1}(x,dy)-\int w(y)\nu_s(x,dy)\right|+\epsilon\\
&\le  2\epsilon+ \frac{2}{m}\le 4\epsilon
\end{align*}
Since the above relation is true for any arbitrary $\epsilon$, we can conclude
\begin{equation*}
\left|\int w(x) (P_s^{\star}\nu^s_{\star})(dx)-\int w(x)\nu^s_{\star}(dx)\right|=0
\end{equation*}
Also, considering that $w\in C_b(\mathcal X, \mathbb R)$ is arbitrary, 
\begin{align*}
P_s^{\star}\nu^s_{\star}=\nu^s_{\star}.    
\end{align*}
\end{proof}
Next, notice that any two trajectories get arbitrarily close to each other, eventually:
\begin{thm}\label{thm:coupling}
Consider two trajectories (realizations) of the Markov chain in \eqref{eq:tvirf} starting from any two different initial conditions $x^s(0)$ and $y^s(0)$. These trajectories couple in the sense of (1.2) in Hairer \cite{hairer2011asymptotic}.
\end{thm}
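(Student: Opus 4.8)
The plan is to promote the distributional contraction already established for the Markov operator into a genuine pathwise coupling of the two trajectories. Recall that \eqref{eq:wass-contraction} gives $\mathcal W_1(P_s^{\star}\nu_1,P_s^{\star}\nu_2)\le r\,\mathcal W_1(\nu_1,\nu_2)$ with $r\in(0,1)$. Since the transformations $\{w_\sigma\}$ are fixed and the probability tuples \eqref{probfunc} are uniformly bounded in $s$, this factor can be taken uniform over the time-scale, so that $r=\sup_s r_s<1$. Specialising the contraction to Dirac masses and using $\mathcal W_1(\delta_x,\delta_y)=\rho(x,y)$, the one-step transition kernel $\nu^s(x,\cdot)=\sum_\sigma 1_{(\cdot)}(w_\sigma(x))\,p^s_\sigma(x)$ of \eqref{eq:tvirf} satisfies $\mathcal W_1\big(\nu^s(x,\cdot),\nu^s(y,\cdot)\big)\le r\,\rho(x,y)$ for every $x,y\in\mathcal X$. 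Because $(\mathcal X,\rho)$ is complete and separable, the Kantorovich--Rubinstein duality guarantees that this $\mathcal W_1$ distance is attained by an optimal coupling.

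I would then build the coupled chain inductively: given the current pair $(x^s(k),y^s(k))$, draw $(x^s(k+1),y^s(k+1))$ from an optimal coupling of the one-step kernels $\nu^s(x^s(k),\cdot)$ and $\nu^s(y^s(k),\cdot)$. By optimality together with the one-step contraction above,
\begin{align*}
\mathbb E\big[\rho(x^s(k+1),y^s(k+1))\mid\mathcal F_k\big]
&=\mathcal W_1\big(\nu^s(x^s(k),\cdot),\nu^s(y^s(k),\cdot)\big)\\
&\le r\,\rho(x^s(k),y^s(k)),
\end{align*}
where $\mathcal F_k$ is the filtration generated by the coupled process. Iterating yields $\mathbb E\,\rho(x^s(k),y^s(k))\le r^k\,\rho(x^s(0),y^s(0))$. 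Since $\rho\ge 0$ and $\sum_k r^k\,\rho(x^s(0),y^s(0))<\infty$, Markov's inequality combined with the Borel--Cantelli lemma gives $\rho(x^s(k),y^s(k))\to 0$ almost surely under the coupling.

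The law of this coupled process is then a coupling of the two trajectory distributions that concentrates on the diagonal at infinity $\{(u,v)\in\mathcal X^{\mathbb N}\times\mathcal X^{\mathbb N}:\lim_{k\to\infty}\rho(u_k,v_k)=0\}$, which is exactly the asymptotic-coupling condition (1.2) of \cite{hairer2011asymptotic}; this establishes the claim. The main obstacle is the measurable-selection step: one must exhibit, as an honest transition kernel on $\mathcal X\times\mathcal X$, a jointly measurable family of optimal couplings of $\nu^s(x,\cdot)$ and $\nu^s(y,\cdot)$. Here the iterated-function-system structure is what makes the construction concrete, since an explicit \emph{maximal} coupling of the finite index distributions $\mathbf p^s(x)$ and $\mathbf p^s(y)$ can replace the abstract optimal coupling: select the common index $\sigma$ for both chains with probability $\sum_\sigma\min\{p^s_\sigma(x),p^s_\sigma(y)\}$, in which case both states move under the same contraction $w_\sigma$, and treat the residual mass separately, with the decoupling probability controlled by the Dini continuity of the probability functions \eqref{probfunc}. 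A secondary point to verify is that the uniform factor $r=\sup_s r_s<1$ survives the switching of probability tuples between successive time-scales $s$, which is immediate once the bounds defining $r$ are taken uniformly in $s$.
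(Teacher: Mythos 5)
Your proposal takes a genuinely different route from the paper's. The paper's proof is a synchronous (common-randomness) coupling that exploits the affine structure of \eqref{eq:tvirf}: both trajectories are driven by the \emph{same} index sequence $\sigma^s_1,\sigma^s_2,\dots$, so the difference satisfies $x^s(k)-y^s(k)=\bigl(\prod_{i=1}^{k-1}A_{\sigma_i^s}\bigr)\bigl(x^s(0)-y^s(0)\bigr)$, and the conclusion is extracted from a norm estimate on products of Schur matrices; no Wasserstein machinery appears at all. You instead promote the operator contraction \eqref{eq:wass-contraction} to a one-step contraction of the transition kernels, build the coupling inductively from optimal couplings, and finish with Borel--Cantelli. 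Your route buys something real: drawing each step from a coupling of the two one-step kernels produces a process whose marginals are, by construction, the two Markov chains, even though the index distributions $\mathbf{p}^s(x^s(k))$ and $\mathbf{p}^s(y^s(k))$ differ when the states differ; the paper's device of re-using the same index sequence for both trajectories is a faithful coupling only when the probabilities are state-independent. The price is that your key inequality is not free: \eqref{eq:wass-contraction} is derived in the paper only under hypothesis (a) of Theorem \ref{thm:purturbation}, i.e.\ the $\mathcal W_1$-contraction of the Markov operator is \emph{assumed} there, whereas the paper's proof of Theorem \ref{thm:coupling} gets its contraction directly from the Schur matrices in \eqref{eq:func-2}. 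To be self-contained you would have to verify that assumption for \eqref{eq:tvirf}. (Conversely, the uniformity $\sup_s r_s<1$ that you flag is not actually needed: the statement concerns trajectories within a single fixed time scale $s$, so a per-$s$ contraction factor suffices.)

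The one step of your argument that fails as written is the proposed replacement of the abstract optimal coupling by the maximal coupling of the index distributions. The maximal coupling maximises the probability that the two chains select the same map; it does not minimise the expected post-step distance, which is what your conditional contraction inequality requires. On the decoupling event, whose probability is of order $\sum_\sigma\lvert p^s_\sigma(x)-p^s_\sigma(y)\rvert$, the chains apply different maps $w_\sigma$ and $w_\tau$ and can separate by a distance of order one, not of order $\rho(x,y)$. The resulting one-step bound is of the form $\mathbb E\bigl[\rho(x',y')\bigr]\le r'\rho(x,y)+C\,\phi\bigl(\rho(x,y)\bigr)$, where $\phi$ is the modulus of continuity of the probability functions; since Dini continuity permits $\phi(t)\gg t$ as $t\to 0$, this does not iterate to geometric (or indeed any) decay, and the Borel--Cantelli step collapses. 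This is exactly the difficulty that makes the Dini-continuous theory of \cite{BarnsleyDemkoEltonEtAl1988} delicate. So you must retain the optimal-coupling construction, and discharge the measurable-selection obligation directly --- which is feasible here, since each kernel $\nu^s(x,\cdot)$ is supported on at most $N$ points, making the optimal coupling a finite linear program whose solution can be selected measurably in $(x,y)$.
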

\begin{proof}[Proof of Theorem \ref{thm:coupling}]
Consider the two trajectories of the system \eqref{eq:tvirf}
with $w_{\sigma_k^s}(x^s(k))=A_{\sigma_k^s}x^s(k)+b_{\sigma_k^s}$
starting from two different initial condition $x^s(0)$ and $y^s(0)$ as follows, where $s$ is denote the discrete-time scale over $\mathbb N$:
\begin{align}
&x^s(k)=\left(w_{\sigma^s_{k-1}}\circ w_{\sigma^s_{k-2}}\circ\cdots\circ w_{\sigma^s_1}\right)(x^s(0))\\
&y^s(k)=\left(w_{\sigma^s_{k-1}}\circ w_{\sigma^s_{k-2}}\circ\cdots\circ w_{\sigma^s_1}\right)(y^s(0))
\end{align}
Let $\norm{\cdot}$ be any norm on $\mathbb R^n$, then any $n\times n$ real-matrix $A$ induces a linear operator on $\mathbb R^n$ with respect to the standard basis and norm of $A$ is well defined as 
\begin{align}
\norm{A}:= \sup\limits_{x\ne 0}\Bigg\{\frac{\norm{Ax}}{\norm{x}}: x\in \mathbb R^n\Bigg\}
\end{align}
Since all matrices involved in the transformations are Schur matrices (i.e., if $\lambda$ is an eigenvalue for such matrix, $|\lambda|<1$) then for any matrix norms induced by vector norms $\norm{\cdot}$, we have the following:
\begin{align}\label{lessone}
0<\norm{\prod\limits_{i=1}^{k-1} A_{\sigma_i^s}} \le  \prod\limits_{i=1}^{k-1}\norm{A_{\sigma_i^s}}< \prod\limits_{i=1}^{k-1}\lambda_{\sigma_i^s}< \left(\hat \lambda \right)^k<1,
\end{align}
where $\lambda_{\sigma_i^s}<1$ is the largest eigenvalue of the matrix $A_{\sigma_i^s}$ and $\hat \lambda$ is the largest of all such $\{\lambda_{\sigma_i^s}\}_{i=1}^{k-1}$. 
One can notice that for all initial values $x^s(0), y^s(0)\in \mathcal X$ we have,
\begin{align*}
\rho(x^s(k), y^s(k))&=\norm{x^s(k)- y^s(k)}\\
&= \left(\norm{\prod\limits_{i=1}^{k-1} A_{\sigma_i^s}}\right) \norm{x^s(0)-y^s(0)}\\
&\le \left (\prod\limits_{i=1}^{k-1}\norm{A_{\sigma_i^s}}\right) \rho(x^s(0), y^s(0))\\
&\le \left(\hat \lambda \right)^k \rho(x^s(0), y^s(0)) \xrightarrow{\mathit{k\to \infty}} 0. \because \eqref{lessone}.
\end{align*}
Thus, the trajectories couple as $k\to \infty$. 
\end{proof}

\subsection{Implications for the time-varying case}\label{sec:D}
Our results of Appendix \ref{sec:C} assure predictability, as introduced in Definition~\ref{dfn:pred}, even in the time-varying case:
\begin{cor}[Predictability in the Time-Varying Setting] \label{cor:pred}
Consider the feedback system depicted in Fig. \ref{system}, for some given finite-dimensional stable linear systems ${\mathcal C}$ and ${\mathcal F}$. Assume that each agent $i \in \{1,\dots,N\}$ has state $x_i^s(k)$ governed by  \eqref{eq:tvirf}, where the affine mapping $w_{ij}^s$ is chosen at each step of time according to a Dini-continuous
probability function $p_{ij}^s(x_i^s(k),\pi(k))$, out of
\eqref{eq:func-2}, where $A_i$ is a Schur matrix and for all $i$, $\pi(k)$, and for each fixed $s$, $\sum_j p^s_{ij}(x^s_i(k),\pi(k)) = 1$. Moreover, assume that the probabilities $p_{ij}^s$ are bounded away from zero and that the conditions of Theorem \ref{thm:existence} hold for the time-varying process thus defined. Then, the feedback loop ensures {\em predictability} to each agent's dynamics, i.e., for each agent $i$, there exists a constant $\overline{r}_i$ such that 
\begin{equation}
\lim_{k\to \infty} \frac{1}{k+1} \sum_{j=0}^k x^s_i(j) = \overline{r}_i \quad \textrm{a.s}.
\end{equation}
\end{cor}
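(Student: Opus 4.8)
The plan is to fix the time-scale $s$ and obtain predictability as a consequence of unique ergodicity of the \emph{frozen} closed loop at that scale, to which an ergodic theorem then applies. The first move is to absorb the feedback into the state: I collect the agent states together with the internal states of the stable controller $\mathcal C$ and the stable filter $\mathcal F$ into a single extended vector ranging over $\mathcal X$, so that the broadcast signal $\pi(k)$ becomes a linear function of the current extended state. With this bookkeeping the $\pi$-dependence of the probabilities is converted into an ordinary state-dependence, and the closed loop is a time-homogeneous iterated function system in the sense of Definition~\ref{dfn:ifs}: the one-step maps are affine, their linear parts compose the Schur matrices $A_i$ with the stable dynamics of $\mathcal C$ and $\mathcal F$ and hence share a contraction factor $r\in(0,1)$, while the selection probabilities $p^s_{ij}(x_i,\pi)$ are Dini-continuous in the extended state and bounded below by $\delta_i>0$. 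Predictability of agent $i$ is then read off by projecting the ergodic average onto the $i$-th coordinate.

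Next I would establish existence and uniqueness of the invariant measure of this frozen chain. Existence is exactly Theorem~\ref{thm:existence}, whose uniform-tightness hypothesis is assumed; the Krylov--Bogoliubov averaging there produces an invariant $\nu^s_\star$ for $P_s^\star$. For uniqueness I reuse the Wasserstein computation \eqref{eq:wass-contraction} from the proof of Theorem~\ref{thm:purturbation}, applied now to $P_s^\star$: it shows that $P_s^\star$ contracts $\mathcal W_1$ by the factor $r$, so Banach's fixed-point theorem pins $\nu^s_\star$ as the unique fixed point and gives $(P_s^\star)^m\nu_0\to\nu^s_\star$ in $\mathcal W_1$ for every initial law $\nu_0$. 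A unique invariant measure is ergodic, so Birkhoff's ergodic theorem for the stationary chain started at $\nu^s_\star$ yields
\[
\frac{1}{k+1}\sum_{j=0}^k x^s_i(j)\;\longrightarrow\;\overline r_i:=\int_{\mathcal X} x_i\,d\nu^s_\star\qquad\text{a.s.},
\]
a constant that may depend on $s$ but not on the running index, as required by Definition~\ref{dfn:pred}.

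It remains to remove the dependence on the initial state, that is, to upgrade the Birkhoff limit from a $\nu^s_\star$-almost-every start to an arbitrary deterministic $x^s(0)$. Here I would invoke the coupling of Theorem~\ref{thm:coupling}: running a trajectory from $x^s(0)$ alongside one from the stationary regime, the product-of-Schur-matrices estimate \eqref{lessone} forces $\rho(x^s(k),y^s(k))\to 0$ almost surely. Since the coordinatewise gap is then a null sequence and the running mean of a null sequence vanishes, the two Ces\`aro averages share the same limit $\overline r_i$; hence the limit does not depend on $x^s_i(0)$, which is precisely predictability.

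I expect the main obstacle to be the honest reconciliation of this coupling with the \emph{state-dependent} selection probabilities. As written, Theorem~\ref{thm:coupling} drives both trajectories with one common symbol sequence $\sigma^s_k$, which is a legitimate statement about contraction of the affine maps but is not by itself a coupling of the two Markov chains, whose symbol laws differ because $p^s_{ij}$ depends on the two distinct current states. The clean repair is a Doeblin-type coupling built from the common minorant $p^s_{ij}\ge\delta_i>0$: at each step both chains can be made to draw the \emph{same} symbol with probability at least $\min_i\delta_i$, and after enough common draws the contraction \eqref{lessone} collapses the gap. This is exactly the point at which the bounded-away-from-zero hypothesis becomes indispensable --- it is what guarantees both ergodicity of $\nu^s_\star$ and the transfer of the ergodic limit to every initial condition; without it some agents could monopolise the resource and the limit could depend on $x^s_i(0)$.
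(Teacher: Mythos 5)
Your overall route is sound in outline but genuinely different from the paper's, which is a three-citation proof: existence of an invariant measure from Theorem \ref{thm:existence}, uniqueness from the coupling Theorem \ref{thm:coupling}, and the almost-sure Ces\`aro convergence for \emph{every} initial condition directly from Theorem 2 of Elton \cite{Elton1987ergodic}, whose hypotheses (Dini-continuous place-dependent probabilities bounded away from zero, contractive maps) are exactly the corollary's assumptions. You instead reconstruct that ergodic theorem from scratch: extend the state to absorb $\mathcal C$ and $\mathcal F$ (a step the paper leaves implicit, but which is indeed needed to make the closed loop Markov), apply Krylov--Bogoliubov for existence, claim a $\mathcal W_1$-contraction of $P_s^\star$ for uniqueness via Banach, use Birkhoff for the stationary chain, and transfer to arbitrary initial states by coupling. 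Your criticism of Theorem \ref{thm:coupling} is apt and worth keeping: driving both trajectories by one common symbol sequence is a contraction statement about compositions of the affine maps, not a coupling of the two Markov chains when the selection probabilities are state-dependent; handling that discrepancy is precisely the content that the paper outsources to Elton.

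There are, however, two concrete gaps in your reconstruction. First, the uniqueness step: contractivity of each map $w_{ij}$ does \emph{not} imply that $P_s^\star$ contracts $\mathcal W_1$ when the probabilities depend on the state. Writing $P_s w(x)=\sum_{\sigma} p^s_{\sigma}(x)\,w(w_{\sigma}(x))$, the Lipschitz estimate picks up the extra term $\sum_{\sigma}\lvert p^s_{\sigma}(x)-p^s_{\sigma}(y)\rvert\,\lvert w(w_{\sigma}(y))\rvert$, and a Lipschitz-$1$ test function need not be bounded, so the inference $\tfrac{1}{r}P_s w\in\mathcal L_1$ can fail; note that the paper itself never derives this contraction --- in Theorem \ref{thm:purturbation} the factor $r$ of item (a) is a \emph{hypothesis}, so you cannot reuse \eqref{eq:wass-contraction} in the corollary's setting without importing an assumption the corollary does not grant. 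Second, your Doeblin repair is pointed in the right direction but incomplete as stated: the minorisation $p^s_{ij}\ge\delta_i>0$ only lets the two chains draw a common symbol with probability bounded below, which makes the gap small infinitely often; it does not by itself force the gap to zero almost surely, since a mismatched draw after a near-coupling can reopen the gap. Closing this requires the Dini modulus: once the chains are close, they continue to draw the same symbol with failure probability controlled by the modulus of continuity of $p^s_{\sigma}$, and the resulting geometric contraction then collapses the gap permanently with probability one --- this is exactly the Barnsley--Demko--Elton--Geronimo argument \cite{BarnsleyDemkoEltonEtAl1988,Elton1987ergodic}, so the honest completion of your sketch is, in effect, to re-prove or cite Elton's Theorem 2, which is what the paper does.
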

\begin{proof}[Proof of Corollary \ref{cor:pred}]
Predictability follows from the existence of an ergodic measure (Theorem \ref{thm:existence}), its uniqueness (Theorem \ref{thm:coupling}), and from
Theorem 2 of Elton \cite{Elton1987ergodic}. 
\end{proof}
In turn, predictability allows for fairness, as introduced in Definition~\ref{dfn:fair}, albeit under strict conditions suggesting that the agent's behaviour is symmetric, in some sense, and their initial states are the same:
\begin{cor}[Fairness in the Time-Varying Setting] \label{cor:fair2}
Consider the feedback system depicted in Fig. \ref{system}, and the same conditions as in Corollary \ref{cor:pred} in the time-varying case. 
If, in addition:
\begin{itemize}
\item the agents' states evolve from a uniform initial state, that is, in the time-varying case, 
if there exists a constant $c$ such that $x_i^0(0) = c$ for all agents $i$, 
\item the probability tuple \eqref{condprob} is uniform, i.e., there exists a family of functions $\{ q^s_\sigma \}_{s=1}^{\infty}$, $q^s_\sigma: \mathcal X\to [0,1]$, such that for all agents $i=1,2 \ldots N$ and all times $s$, $p_i^s(x)=q^s(x)$,
\end{itemize}
then the feedback loop ensures {\em fairness} of the agents' dynamics within each segment $s$. That is, for all segments $s$, there exists a constant $\overline{r}^s$ such that for all agents $i$ 
\begin{equation}
\label{eq:time-varying-fair}
\lim_{k\to \infty} \frac{1}{k+1} \sum_{j=0}^k x^s_i(j) = \overline{r}^s \quad \textrm{a.s}.
\end{equation}
\end{cor}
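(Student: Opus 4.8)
The plan is to reduce the time-varying claim to a per-segment application of the predictability result, together with the symmetry hypotheses, exactly mirroring the one-line argument behind the time-invariant Corollary \ref{cor:fair}. First I would fix a segment index $s$. Freezing the probability tuple at $q^s$, the recursion \eqref{eq:tvirf} becomes a genuine \emph{time-invariant} iterated function system, driven by the transformation family $\{w_\sigma\}_{\sigma=1}^N$ common to all agents and by the single tuple $(q_1^s,\dots,q_N^s)$ subject to \eqref{condprob}. Consequently all of the machinery developed for a fixed-$s$ chain applies directly: Theorem \ref{thm:existence} supplies an invariant measure $\nu_\star^s$, and the coupling of trajectories from arbitrary initial conditions established in Theorem \ref{thm:coupling} forces $\nu_\star^s$ to be the \emph{unique} such measure.

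Next I would invoke Corollary \ref{cor:pred}, whose hypotheses are assumed verbatim here, to obtain for each agent $i$ the almost-sure existence of the Ces\`aro limit $\overline{r}_i^s = \lim_{k\to\infty}\frac{1}{k+1}\sum_{j=0}^k x_i^s(j)$. The content of that corollary, routed through Theorem 2 of Elton \cite{Elton1987ergodic}, is in fact sharper than mere existence: the limit is a deterministic constant, independent of the agent's initial state, and equal to the spatial average $\int_{\mathcal X} x \, d\nu_\star^s(x)$ against the unique invariant measure of the segment-$s$ chain.

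The crux is then to observe that every agent $i$ is governed by \emph{exactly the same} segment-$s$ chain. The transformations $\{w_\sigma\}$ are shared by construction, and the uniformity hypothesis $p_i^s(x)=q^s(x)$ renders the selection probabilities agent-independent; hence all agents share the one invariant measure $\nu_\star^s$ and therefore the one spatial average. Setting $\overline{r}^s := \int_{\mathcal X} x \, d\nu_\star^s(x)$ yields $\overline{r}_i^s = \overline{r}^s$ for every $i$, which is precisely \eqref{eq:time-varying-fair}.

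I expect the only genuinely delicate point to be the passage from ``identical dynamics'' to ``identical limits.'' This hinges entirely on the \emph{uniqueness} of $\nu_\star^s$ (so that the ergodic average is a well-defined deterministic constant shared by all agents) rather than on any finite-time comparison of trajectories, and it is where Theorem \ref{thm:coupling} does the essential work. It is worth remarking that the uniform-initial-state hypothesis $x_i^0(0)=c$ is not actually needed for the equality of limits: predictability already strips away all dependence on initial conditions, so the averages coincide regardless of where each agent starts. That hypothesis instead secures the stronger symmetry that the agents' trajectories are identically distributed at every finite $k$, which is immaterial to Definition \ref{dfn:fair} but makes precise the sense in which the population is treated symmetrically.
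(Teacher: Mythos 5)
Your proposal is correct, but it fleshes out the argument along a subtly different route than the paper does, so a comparison is worthwhile. The paper disposes of Corollary \ref{cor:fair2} in a single sentence -- fairness follows from the Markov property of the time-varying model \eqref{eq:tvirf} -- which is a symmetry-of-law argument: under the two bulleted hypotheses, every agent's segment-$s$ process is a copy of the same Markov chain started at the same point $c$, hence the agents' trajectories are identically distributed, and the almost-sure constant limits already supplied by Corollary \ref{cor:pred} must therefore coincide. That argument genuinely uses the uniform-initial-state hypothesis. You instead route the equality of limits through unique ergodicity: Theorem \ref{thm:existence} gives an invariant measure, the coupling of Theorem \ref{thm:coupling} forces its uniqueness, Elton's theorem \cite{Elton1987ergodic} identifies each agent's Ces\`aro limit with the spatial average $\int_{\mathcal X} x \, \nu^s_{\star}(dx)$ independently of the starting point, and uniformity of the probability tuple makes the transition kernel -- hence the invariant measure, hence the limit -- agent-independent. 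What your route buys is exactly your closing observation: the hypothesis $x_i^0(0)=c$ is dispensable for \eqref{eq:time-varying-fair}, since initial-condition independence is already built into predictability (Definition \ref{dfn:pred}); the paper retains it because its shorter symmetry argument needs identical laws, not just identical limits. One caveat applies to both proofs equally: within a segment the selection probabilities also depend on the broadcast signal $\pi(k)$, so each agent's state is Markov only jointly with the controller and filter states; the paper absorbs this into the hypotheses of Corollary \ref{cor:pred}, and your proof inherits precisely the same level of rigor on this point, so it is not a gap relative to the paper.
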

\begin{proof}[Proof of Corollary \ref{cor:fair2}]
Fairness follows from the Markov property of the time-varying model \eqref{eq:tvirf}.
\end{proof}
As above:
\begin{rem}
Throughout Corollaries
\ref{cor:pred}--\ref{cor:fair2}, Dini's condition on the probabilities may be replaced by simpler, more conservative assumptions, such as Lipschitz or H\"older continuity \cite{BarnsleyDemkoEltonEtAl1988}.
\end{rem}
The results presented in Theorem \ref{thm:coupling} and in Corollary \ref{cor:pred} ensure that the participants' trajectories still couple for different initial conditions; that is, predictability still holds.
\clearpage
\onecolumn
\appendix
\section{An Overview of Notation}
\label{app:symbols}
\begin{tabularx}{\linewidth}{l|X}
\caption{A Table of Notation}\\\toprule\endfirsthead
\toprule\endhead
\midrule\multicolumn{2}{l}{\itshape continues on next page}\\\midrule\endfoot
\bottomrule\endlastfoot
\textbf{Symbol} & \textbf{Meaning} \\
\midrule
$\mathbb{N}$                               & the set of natural numbers.\\
$\mathbb{Q}$                               & the set of rational numbers.\\
$\mathbb{R}$                               & the set of real numbers.\\
${\mathbb Z}$                              & set of all integers.\\
$\mathcal A$                               & a Borel set i.e an event i.e a typical element in $\mathcal B(\mathcal X)$, Definition~\ref{dfn:ifs}.\\
$\mathcal A^c$                             & the complement of the event or the Borel set $\mathcal A$, \eqref{eq:char-func}.\\
${\mathcal{C}}$                            & controller representing the central authority.\\
${\mathcal{F}}$                            & a filter.\\
$\mathcal K$                               & a compact subset in $\mathcal X$.\\
$\mathcal M$                               & a subset in $\mathcal M\left(\mathcal X\right)$.\\
$\mathcal X$                               & a metric space.\\ 
$\mathcal{L}_1$                            & the space of all Lipschitz maps with Lipschitz constant $1$.\\
$1_{\mathcal A}$                           & the characteristic function of $\mathcal A$.\\
$\mathcal B(\mathcal X)$                   & Borel sigma algebra on $\mathcal X$.\\
$\mathcal M (\mathcal X)$                  & a space of all probability measure on $\mathcal X$.\\
$C_b(\mathcal X, \mathbb R)$               & Banach space of real-valued continuous functions from $\mathcal X$ to $\mathbb R$.\\
$\mathcal W_1(\nu_1, \nu_2)$               & a Wasserstein-$1$ distance between two probability measure $\nu_1$ and $\nu_2$.\\
$\mathcal{S}_1,\dots, \mathcal{S}_N$       & $N$ agents in the network.\\
$A_i$                                      & agent's state transformation matrix, which is assumed to be Schur.\\
$A^{'}_i$                                   & after purturbation, agents state transformation matrix which is assumed Schur matrix.\\
$A_{\sigma_k^s}$                            & at $k^{\text{th}}$ time step, state-transformation matrix for a randomly selected $\sigma_{k^s}$, when  when $s^{\text{th}}$ probability-tuple-function is active.\\
$P$                                         & a Markov operator.\\
$P^{\star}$                                 & a dual of the Markov operator $P$.\\
$P_1^{\star}$                               & a Markov operator, cf. Theorem \ref{thm:purturbation}.\\
$P_2^{\star}$                               & a Markov operator, cf. Theorem \ref{thm:purturbation}.\\
$P_s$                                       & a Markov operator  when $s^{\text{th}}$ probability-tuple-function is active.\\
$P^{\star}_s$                               & a dual of the Markov operator $P_s$  when $s^{\text{th}}$ probability-tuple-function is active.\\
$N$                                         & the total number of agent in the network.\\
$X_k$                                       & a discrete-time-homogeneous Markov chain with state-space $\mathcal X$.\\
$\alpha$                                    & a constant \eqref{eq:simpl-cntrlr}.\\
$\beta$                                     & a bound for the real-valued continuous functions, cf. Theorem \ref{thm:purturbation}.\\
$\gamma$                                    & a constant \eqref{eq:simpl-cntrlr}.\\
$\delta$                                    & a bound on the rate of time steps between perturbations.\\
$\epsilon>0$                                & a sufficiently small strictly positive real number.\\
$\eta$                                      & a bound on the perturbation in probabilities, cf. Theorem \ref{thm:purturbation}.\\
$\kappa$                                    & a constant \eqref{eq:simpl-cntrlr}.\\
$\rho$                                      & a metric on the metric space $\mathcal X$.\\
$\delta_i$                                  & a lower bounds for $p_{ij}$.\\
$\pi(k)$                                    & the signal broadcast at time $k$.\\
$\nu_1^{\star}$                             & a unique invariant probability measure for the Markov operator $P_1^{\star}$, cf. Theorem \ref{thm:purturbation}.\\
$\nu_2^{\star}$                             & a unique invariant probability measures for the Markov operator $P_2^{\star}$, cf. Theorem \ref{thm:purturbation}.\\
$\nu_k$& distribution of $x_k\in \mathbb R$.\\
$\nu(x, \mathcal A)$                        & a transitional probability measure, i.e., probability of transition from a point $x$ to some $\mathcal A\in B(\mathcal X) $.\\
$\sigma_0,\sigma_1,\dots$                   & i.i.d. discrete random variables taking values in $[1,N]$.\\
$\lambda_{\sigma_i^s}$                      & the largest eigenvalues of the matrix $A_{\sigma_i^s}$\\
$\hat \lambda$                              & largest element in the set of  $\{\lambda_{\sigma_i^s}\}_{i=1}^{k-1}$.\\
$i$                                         & an agent in the network.\\
$\hat p$                                    & a predictability vector, \eqref{eq:pred-vec}.\\
$\hat f$                                    & a fairness vector, \eqref{eq:fair-vec}.\\
$r$                                         & a typical constant lies in the open interval $(0,1)\subseteq \mathbb R$, cf. Theorem \ref{thm:purturbation}.\\
$s\in \mathbb N$                            & notation for a discrete-time scale, \ref{sub-sec:time-varying}.\\ 
$\overline{r}_i$                            & the almost sure limit of predictability for agent $i$.\\
$\overline{r}$                              & the almost sure limit of fairness for agent $i$.\\
$r'$                                        & a typical real-constant used in the Theorem \ref{thm:purturbation}.\\
$x_i(k)$                                    & the usage of the resource by the $i^{\text{th}}$ agent at the time instant $k$.\\
$y(k)$                                      & a sum of all usage $x_i(k)$ of the resource by the all $N$ agents at the time instant $k$.\\
$\hat{y}$& an estimate of $y$.\\
$p_{ij}\left(x_i(k),\pi(k)\right)$          & Dini continuous probability functions of agents for agent $i$.\\
$b_{ij}$                                    & a constant term whenever $w_{ij}$ is an affine mapping\\
$p_{i1}$                                    & the probability that the $i^{\text{th}}$ agent is at state $1$ or on, at the time instant $k$, \eqref{eq:state-dep-prb1}.\\
$p_{i0}$                                    & the probability that the $i^{\text{th}}$ agent is at state $0$ or off at the time instant $k$, \eqref{eq:state-dep-prb2}.\\
$\{w_{\sigma}(x)\}_{\sigma=1}^{N}$          & a family of response functions for agents.\\
$\{p_{\sigma}(x)\}_{\sigma=1}^{N}$          & a family of probability functions of agents choosing response functions.\\
$\mathbf{p}^i(x^i(k))=(p_1^i(x^i(k)),\dots, p_N^i(x^i(k)))$& a probability tuple.\\
$e(k)$                                      & the error signal.\\
$\mathbf{1}^{\top}=\left(1,1,\dots, 1\right)^{\top}$                 & vector of $1$.\\
$\{p_i(x)\}_{i=1}^{N}$                      & probability functions used in Theorem \ref{thm:purturbation}.\\
$\{p'_i(x)\}_{i=1}^{N}$                     & perturbed probability functions, cf. Theorem \ref{thm:purturbation}.\\
$\{\mathbf{p}^s(x)=(p_1^s(x), p_2^s(x),\dots, p_N^s(x))\}_{s=1}^{\infty}$&a countable family of $N$-tuple probability functions.\\
$\sigma^s_0,\sigma^s_1,\dots$               & i.i.d discrete-random-variable taking values in $\{1,2,\dots, N\}$ when $s^{\text{th}}$ probability-tuple-function is active.\\
$x^s(k)$                                    & state at time-instant $k$ and when $s^{\text{th}}$ probability-tuple-function is active.\\
$x_i^s(k)$                                  & state of $i^{\text{th}}$ agent at time-instant $k$ and when $s^{\text{th}}$ probability-tuple-function is active.\\
$x^i(0)$                                    & initial state for the agent $i$.\\
$\{w_{\tau}\}$                              & a set of valid possible behaviors for any agent.\\
$w_{ij}$                                    & an affine map in Theorem \ref{thm:theorem-one}.\\
$[0,1]$                                     & unit closed interval in $\mathbb R$ i.e all $x\in \mathbb R$ such that $0\le x\le 1$.\\
$[1,N]:=\{1,2,\dots,N\}$                    & a finite set consists of number of agents in the network.\\
\end{tabularx}
\end{document}